\theoremstyle{thmstyleone}
\newtheorem{theorem}{Theorem}
\newtheorem{claim}{Claim}
\theoremstyle{thmstyletwo}%
\theoremstyle{thmstylethree}%
\newtheorem{definition}{Definition}%
\newtheorem{lemma}{Lemma}%
\pgfplotsset{compat=newest}
\newcommand{\IDi}{\ensuremath{\mathsf{id}_i}\xspace}
\newcommand{\abort}{\ensuremath{\mathsf{abort}}\xspace}
\newcommand{\initok}{\ensuremath{\mathsf{OK}}\xspace}
\newcommand{\FF}{\mathcal{F}}
\newcommand{\Setup}{\ensuremath{\mathsf{Setup}}\xspace}
\newcommand{\Generate}{\ensuremath{\mathsf{Generate}}\xspace}
\newcommand{\Prove}{\ensuremath{\mathsf{Prove}}\xspace}
\newcommand{\Reveal}{\ensuremath{\mathsf{Reveal}}\xspace}
\newcommand{\model}{\ensuremath{M}\xspace}
\newcommand{\publicwitness}{\ensuremath{\mathsf{tk}}\xspace}
\newcommand{\globalpublicwitness}{\ensuremath{\mathsf{tk}_g}\xspace}
\newcommand{\secretwitness}{\ensuremath{\mathsf{w}}\xspace}
\newcommand{\globalwitness}{\ensuremath{\mathsf{W}_g}\xspace}
\newcommand{\proofstate}{\ensuremath{\mathsf{pf}}\xspace}
\newcommand{\prfvalue}{\ensuremath{\mathsf{R}}\xspace}
\newcommand{\sakeyk}{\ensuremath{\mathsf{k}_i}\xspace}
\newcommand{\sarandb}{\ensuremath{\mathsf{b}_i}\xspace}
\newcommand{\sapairwisemask}{\ensuremath{\mathsf{k}_{i,j}}\xspace}
\newcommand{\pairwisemaskext}{\ensuremath{\mathsf{g^{{k_i}{k_j}}}}\xspace}
\newcommand{\sharekey}{\ensuremath{\mathsf{s}^k_{i,j}}\xspace}
\newcommand{\sharerandom}{\ensuremath{\mathsf{s}^b_{i,j}}\xspace}
\newcommand{\saclientspublic}{\ensuremath{\mathsf{g}^{k_i}}\xspace}
\newcommand{\saneighborpublic}{\ensuremath{\mathsf{g}^{k_j}}\xspace}
\newcommand{\saneighborkeyk}{\ensuremath{\mathsf{k}_j}\xspace}
\newcommand{\clientkeymaterials}{\ensuremath{\mathsf{key}_{c_i}}\xspace}
\newcommand{\serverkeymaterials}{\ensuremath{\mathsf{key}_s}\xspace}
\newcommand{\clientset}{\ensuremath{\mathcal{C}}\xspace}
\newcommand{\clienti}{\ensuremath{C_i}\xspace}
\newcommand{\onlineclientset}{\ensuremath{\mathcal{C}_o}\xspace}
\newcommand{\droppedclientset}{\ensuremath{\mathcal{C}_d}\xspace}
\newcommand{\flserver}{\ensuremath{S_{fl}}\xspace}
\newcommand{\posserver}{\ensuremath{S_{sp}}\xspace}
\newcommand{\G}{\ensuremath{\mathbb{G}}\xspace}
\newcommand{\param}{\ensuremath{\mathsf{pp}}\xspace}
\newcommand{\mlparam}{\ensuremath{\mathsf{pp}_{ml}}\xspace}
\newcommand{\tsparam}{\ensuremath{\mathsf{pp}_{ts}}\xspace}
\newcommand{\Zp}{\ensuremath{\mathbb{Z}_p}}
\newcommand{\g}{\ensuremath{\mathsf{g}}\xspace}
\newcommand{\Zq}{\ensuremath{\mathbb{Z}_q}}
\newcommand{\allpk}{\ensuremath{\mathcal{VK}}\xspace}
\newcommand{\negl}{\ensuremath{\mathsf{negl}(\lambda)}\xspace}
\newcommand{\poly}{\ensuremath{\mathsf{poly}}\xspace}
\newcommand{\hash}{\ensuremath{\mathsf{H}}\xspace}
\newcommand{\hashone}{\ensuremath{\mathsf{H}_1}\xspace}
\newcommand{\hashtwo}{\ensuremath{\mathsf{H}_2}\xspace}
\newcommand{\Train}{\ensuremath{\mathsf{Train}}\xspace}
\newcommand{\SAGen}{\ensuremath{\mathsf{SASetup}}\xspace}
\newcommand{\SAProtect}{\ensuremath{\mathsf{SAProtect}}\xspace}
\newcommand{\SAAgg}{\ensuremath{\mathsf{SAAggregate}}\xspace}
\newcommand{\prfkey}{\ensuremath{\mathsf{K}}\xspace}
\newcommand{\idata}{\ensuremath{D_i}\xspace}
\newcommand{\paramsa}{\ensuremath{\mathsf{pp}_{sa}}\xspace}
\newcommand{\mlmsg}{\ensuremath{\mathsf{m}_{i}}\xspace}
\newcommand{\mlmsgprotected}{\ensuremath{\mathsf{c}_{i}}\xspace}
\newcommand{\pk}{\ensuremath{\mathsf{vk}}\xspace}
\newcommand{\sk}{\ensuremath{\mathsf{sk}}\xspace}
\newcommand{\pki}{\ensuremath{\mathsf{vk}_i}\xspace}
\newcommand{\ski}{\ensuremath{\mathsf{sk}_i}\xspace}
\newcommand{\msg}{\ensuremath{\mathsf{msg}}\xspace}
\newcommand{\sig}{\ensuremath{\mathsf{\sigma}}\xspace}
\newcommand{\TSKeyGen}{\ensuremath{\mathsf{TSKeyGen}}\xspace}
\newcommand{\TSSign}{\ensuremath{\mathsf{TSSign}}\xspace}
\newcommand{\TSSignAgg}{\ensuremath{\mathsf{TSSignAgg}}\xspace}
\newcommand{\TSVerify}{\ensuremath{\mathsf{TSVerify}}\xspace}
\newcommand{\Adv}{\ensuremath{\mathcal{A}}\xspace}
\newcommand{\B}{\ensuremath{\mathcal{B}}\xspace}
\newcommand{\Sim}{\ensuremath{\mathcal{S}}\xspace}
\newcommand{\FedPoP}{\ensuremath{\mathsf{FedPoP}}\xspace}
\newcommand{\Dist}{\ensuremath{\mathcal{S}}\xspace}
\newcommand{\Real}{{\mathsf{REAL}}}
\newcommand{\Ideal}{{\mathsf{IDEAL}}}
\newcommand{\Func}[1]{{\FF_{\scriptstyle\mathsf{#1}}}}
\newcommand{\Prot}[1]{\ensuremath{\Pi_{\scriptstyle\mathsf{#1}}}}
\newcommand{\Fpop}{\ensuremath{\Func{FedPoP}}\xspace}
\newcommand{\Ppop}{\ensuremath{\Prot{FedPoP}}\xspace}
\newtcolorbox[auto counter,number format=\Roman ]{functionality}[2][]{enhanced,colback=white,
fonttitle=\bfseries,coltitle=gray!25!black,
attach boxed title to top left=
{xshift=2mm,yshift=-3mm,yshifttext=-1mm},
boxed title style={colframe=gray!75!black,
colback=yellow!50!gray},
title=Functionality #2,#1}
\newtcolorbox[auto counter,number format=\Roman ]{scheme}[2][]{enhanced,colback=white,
fonttitle=\bfseries,coltitle=gray!25!black,
attach boxed title to top left=
{xshift=2mm,yshift=-3mm,yshifttext=-1mm},
boxed title style={colframe=gray!75!black,
colback=green!10!gray},
title=#2,#1}
\definecolor{plotblue}{RGB}{0,114,189}
\definecolor{plotred}{RGB}{217,83,25}
\definecolor{plotgreen}{RGB}{119,172,48}
\definecolor{plotpurple}{RGB}{126,47,142}
\begin{document}

\title{\FedPoP: Federated Learning Meets Proof of Participation}

\author{
\IEEEauthorblockN{
Devriş İşler\IEEEauthorrefmark{1}\IEEEauthorrefmark{2}, 
Elina van Kempen\IEEEauthorrefmark{3}, 
Seoyeon Hwang\IEEEauthorrefmark{4}, 
and Nikolaos Laoutaris\IEEEauthorrefmark{1}
}

\IEEEauthorblockA{
\IEEEauthorrefmark{1}IMDEA Networks Institute, Spain\\
\IEEEauthorrefmark{2}Universidad Carlos III de Madrid, Spain\\
\IEEEauthorrefmark{3}University of California, Irvine, USA\\
\IEEEauthorrefmark{4}Stealth Software Technologies Inc., USA\\
Email: devris.isler@gmail.com, evankemp@uci.edu, seoyeon@stealthsoftwareinc.com, nikolaos.laoutaris@imdea.org
}

\thanks{Corresponding author: Devriş İşler (devris.isler@gmail.com). This version is currently under review.}
}
\maketitle
\begin{abstract} 
Federated learning (FL) offers privacy preserving, distributed machine learning, allowing clients to contribute to a global model without revealing their local data.  
As models increasingly serve as monetizable digital assets, the ability to prove participation in their training becomes essential for establishing ownership. 
In this paper, we address this emerging need by introducing \FedPoP,
a novel FL framework that allows non-linkable proof of participation while preserving client anonymity and privacy without requiring either extensive computations or a public ledger.
\FedPoP is designed to seamlessly integrate with existing secure aggregation protocols to ensure compatibility with real-world FL deployments.
We provide a proof of concept implementation and an empirical evaluation under realistic client dropouts. In our prototype, \FedPoP introduces 0.97 seconds of per-round overhead atop securely aggregated FL and enables a client to prove its participation/contribution to a model held by a third party in 0.0612 seconds. 
These results indicate \FedPoP is practical for real-world deployments that require auditable participation without sacrificing privacy.
\end{abstract}

\begin{IEEEkeywords} Federated Learning, Intellectual Property, Ownership Protection, Proof of Participation, Proof of Ownership, Privacy.
\end{IEEEkeywords}

\section{Introduction}\label{intro}
Federated learning (FL) \cite{mcmahan2017communication} has become one of the innovative distributed machine learning structures wherein private data holders (a.k.a. \textit{clients}) contribute to a global model  initialized and aggregated by a (federated learning) \textit{server} (so-called \textit{aggregator}) 
without revealing their raw input data. 
The most common FL setting involves three parties: a \textit{server} who initiates a model and aggregates training data (local models) from clients,  
a large number of \textit{clients} who collaboratively train the model, and 
a \textit{service provider} who deploys the model to provide services to its users. 
In a nutshell, an FL system consists of iterative aggregation rounds where 1) the server sends global model parameters to clients; 2) each client trains the model using its own private data and transmits updated parameters to the server; and 3) the server aggregates the updated parameters sent by the clients into a new global model using an aggregation procedure (e.g., FedAvg \cite{mcmahan2017communication}, FedQV \cite{fedqvtian24}). 
The final global model is delivered to the service provider when the training is completed. 
Unlike traditional machine learning that relies on centralized data collection for model training,  
FL inherently preserves client data privacy by transmitting only model updates to the server instead of exposing their raw data. 
It is further advanced with secure aggregation protocols~\cite{flsok23} to prevent the server from inferring private information from individual model updates, e.g., via privacy attacks \cite{FLvulner21}.

Due to their powerful capabilities derived from training with massive client data, FL models have been increasingly adopted in various large-scale applications from improving virtual keyboard search (e.g., Google Keyboard Gboard~\cite{googlekeyboard}) to prediction of COVID-19 patients' future oxygen requirements~\cite{flcovidprediction}. 
These trends position FL models as monetizable assets \cite{mlaaSRibeiroGC15}, providing a means for clients to extract economic benefit from their data.
However, this monetization raises the issue of preserving ownership and protecting the intellectual property rights of the models. 
For instance, consider a service provider monetizes a trained model in unauthorized or unintended ways. In such cases, it becomes crucial for clients (as well as servers) to have a means of demonstrating their participation and ownership, for example, in pursuing legal recourse.
 
Note that in the context of FL, participation in training provides the primary basis for ownership claims, as the global model is derived from the aggregation of client updates.
This type of proof is important not only for establishing economic benefits, but also for enforcing their data rights under data protection regulations such as the General Data Protection Regulation~\cite{gdpr} or the California Consumer Privacy Act~\cite{ccpa}, which grant individuals the ability to access, update, or delete their data. 
For instance, a client may request the removal of their contribution from a deployed model~\cite{LiuXYWL22forgettonfl}. 
To comply, the service provider must first determine whether the client participated in the training process. 
Therefore, mechanisms for verifying participation in FL become a prominent feature to protect ownership and regulatory compliance.

Prior work on proving the participation/ownership is grounded on either 1) \textit{watermarking}; or 2) \textit{blockchain}. 
In the watermarking approach, a watermark is embedded on the client side or the server side. 
In a client-side watermarking \cite{fedipr23,fedcip23,waffle21,flwm22,FedZKP}, clients insert a watermark to the model during the local training while server-side watermarking \cite{flareTekgulA23} (so-called \textit{fingerprinting}) focuses on detecting intellectual property infringement  to learn which client unlawfully distributed the model. 
On the other hand, blockchain approaches \cite{BuyukatesHHFZLFA23,pflm21} utilize blockchain and more advanced cryptographic solutions (e.g., zero-knowledge proofs) to ensure integrity and verifiability in FL while the identities of the clients are stored on the public ledger. 
\cite{BuyukatesHHFZLFA23} proposes a data marketplace of privacy-preserving FL providing proof-of contribution based reward allocation so that the clients are compensated based on their contributions to the model, and zero-knowledge proofs are used to ensure verifiability of all the computations in their protocol (e.g., clients' contribution assessment).  
\cite{pflm21}, on the other hand, provides the proof of participation functionality to protect clients against deceiving attacks where the server deceives a subset of clients into thinking that the other clients are offline. 
\cite{pflm21} mitigates the deceiving attack by enforcing the server to announce the online clients over blockchain. 
While the aforementioned approaches provide a proof of participation functionality on the top of FL, they have the following three shortcomings: 1) the identities of clients are public (\textit{privacy} and \textit{anonymity}); 2) the model may be needed to perform proof of participation (\textit{model privacy}); and 3) a curious party can derive statistics about a client's participation over a period of time (\textit{unlinkability}). 
For example, in \cite{pflm21}, \textit{anyone} can gain the knowledge of the clients' identities because of the public ledger, drive statistics about clients' participation for a given time, and link a client to its multiple participation. 

To overcome these limitations, 
we propose \FedPoP, a novel federated learning framework that allows privacy-preserving proof of participation without relying on public ledgers or expensive computations.  
In \FedPoP, a client can prove its participation in a global model to the service provider in a secure and private manner. 
\FedPoP ensures that no one can forge a valid proof of participation (\textit{soundness}), no information about other participants is disclosed (\textit{privacy}), the identity of the proving client remains hidden (\textit{anonymity}), and multiple proofs from the same client cannot be linked (\textit{unlinkability}).
Our protocol is based on secure aggregation and lightweight cryptographic primitives, including threshold signatures and oblivious pseudorandom functions, to guarantee these properties. 

\textbf{Our approach.} 
In our approach, clients produce their local updates using the model parameters sent by a server and hide their local updates before sending them to the server via secure aggregation as in secure FL. 
The server securely aggregates the local updates and returns the global model to the clients. 
With the help of the server, the (participating) clients to jointly generate a high entropy secret as a secret witness.  
Later, the clients, using the server as a proxy, jointly generate a proof of participation.
Each party in the system receives also a global witness. 
During the proof of participation, a suspicious client communicates with a service provider. 
The client first proves that it has a proof of participation (e.g., the signature) verified under the global public witness. 
Second, the client proves that it knows the secret witness as well. 
The client does not reveal any information about its secret witness to the service provider.  
Furthermore, \FedPoP ensures that the client's identity is kept secret and the proof of participation does not reveal the identity of the other clients. 
The service provider cannot link the multi-participation of the same client to the model. 
A malicious adversary has to corrupt a certain number of clients in order to forge a signature. 
\FedPoP utilizes threshold signatures to enable clients to generate a signature on a global model even in the presence of dropped-out clients and (oblivous) pseudorandom function to prove the knowledge of the secret witness without necessitating expensive computation such as zero-knowledge proof.
\FedPoP is effective regardless of the underlying secure aggregation protocol since it makes no modification to the aggregation protocol. 

We implemented \FedPoP as a proof-of-concept and evaluated it with various parameters and thresholds. 
Our evaluation results show that \FedPoP introduces 0.97 seconds of overhead on the top of a securely aggregated FL, and that a client can prove its participation in only $0.0612$ seconds. 

\textbf{Our contributions are summarized as follows:} 
\begin{itemize}
   \item We identify the need of a new privacy-preserving proof of participation in FL and define its security and functionality requirements. 
    \item We propose \FedPoP, the first FL framework that provides proof of participation with guarantees of privacy, anonymity, and unlinkability.
    \item We implement and evaluate \FedPoP, showing its feasibility with minimal overhead atop a securely aggregated FL settings. 
\end{itemize}

\section{Related Work}\label{relatedwork}
The strategies proposed by prior work on safeguarding model ownership and verifying it can be categorized into two: 
1) \textit{blockchain}; 
and 2) \textit{watermarking}.

\textit{\textbf{Blockchain-based methods.}}
One line of research has leveraged public ledgers, i.e. blockchain technologies, to (in)directly provide proof of participation in a federated learning setting. 
For instance, Jiang et al. \cite{pflm21} introduce a privacy-preserving federated learning scheme with membership proof, to mitigate a special attack called \textit{deceiving attack} performed by a misbehaving server.
A deceiving attack aims to recover the gradients of the clients even when they are protected by secure aggregation (e.g., masking).  
To do so, the server deceives a subset of clients into thinking that the other clients are offline. 
To mitigate such deceiving attacks, Jiang et al. \cite{pflm21} suggest to prove the 
membership of clients whose gradients contributed to the global model by enforcing the server to announce the online users over the public ledger.  
Then, for verifiable aggregation, they leverage aggregated signatures, homomorphic encryption, and cryptographic accumulators. 

Buyukates et al. \cite{BuyukatesHHFZLFA23} also indirectly provide the functionality of the proof of participation.  
They propose a blockchain-empowered data marketplace that fairly rewards the clients according to their contributions to the model. Their approach logs each client's contribution in the public ledger and uses the zero-knowledge proofs for verifiability.  
Although the aforementioned works facilitate proof of participation, 
they do not guarantee either \textit{privacy} or \textit{anonymity} of the clients, as they are recorded in the public ledger.
Furthermore, because a curious party can analyze the public ledger to infer a client’s activity over time, \textit{unlinkability} is not achieved. 

\textit{\textbf{Watermarking-based methods.}}
Another approach is to embed a watermark during or after the generation of an FL model, either on the client side or the server side. 
In the client-side watermarking \cite{fedipr23,fedcip23,waffle21,flwm22,FedZKP}, clients insert a watermark to the model during the local training, so that they can prove their contribution by revealing their watermark on the model when they want. 
On the other hand, the server-side watermarking \cite{flareTekgulA23} (so-called \textit{fingerprinting}) focuses on detecting intellectual property infringement by tracking model leaks. 
While the aforementioned works allow only clients or the server to verify the watermark by accessing the watermark, only \cite{FedZKP} allows the clients to prove their participation (ownership) without revealing the watermark by using the zero-knowledge proofs. 
However, integrating watermarking techniques to FL settings faces four main challenges: 
1) (possible) accuracy loss of the model; 
2) undetectable watermarking after inserting a certain amount of noise; 
3) the necessity of accessing the model for watermark detection; and 4) the demand on a trusted third party in case of a dispute.
Moreover, privacy and anonymity of the clients are not considered as major features. 
During a proof of participation to a third party via watermark detection, the client identity is revealed. 
Thus, the third party can easily link the multiple participation of the same client. 
Therefore, the watermarking solutions do not provide unlinkability. 

In the light of the drawbacks above, we focus on a less explored approach facilitating cryptography and propose a solution requiring less computing effort for clients and service providers and without relying on a public ledger.

\section{Preliminaries}\label{preliminaries}
Let $\lambda \in \mathbb{N}$ be a security parameter. A probabilistic polynomial time (PPT) algorithm is a probabilistic algorithm taking $1^\lambda$ as input that has running time bounded by a polynomial in $\lambda$. A positive function $\mathsf{negl}: \mathbb{N} \rightarrow \mathbb{R}$ is called  \textit{negligible}, if for every positive polynomial $\poly(\cdot)$, there exists a constant $c > 0$ such that for all $x > c$, we have  $\mathsf{negl}(x) < 1/\poly(x)$. 
A hash function $\hash$ is a deterministic function from an arbitrary size input to a fixed size $param$ output, denoted $\hash:\{0,1\}^*\rightarrow \{0,1\}^{\lambda}$ where $\lambda \in \mathbb{N}$ is a security parameter. We deploy a collision resistant hash function. 
All the other primitives are discussed below and all other notation is shown in Table~\ref{tab:notation}.

\begin{table}[htb]
    \centering
    \caption{Notation.}
    \begin{tabular}{|c|l|}
        \hline
        \textbf{Notation} & \textbf{Interpretation} \\
        \hline
        $n$ & Number of clients \\
        \hline
        $n_{drop}$ & Maximum number of dropped-out clients \\
        \hline
        \clienti & Client $i$ \\
        \hline
        \clientset & All the clients in the system ($\{\clienti\}_{i\in[n]}$) \\
        \hline
        \posserver & Service Provider \\
        \hline
        \flserver & FL Server \\
        \hline
        \idata & Private data of \clienti \\
        \hline
        \model & Machine learning model \\
        \hline
        $\secretwitness_i$ & \clienti's secret witness \\
        \hline
        \globalpublicwitness & Global verification token \\
        \hline
        \proofstate & Proof of participation \\
        \hline
        \globalwitness & Group secret witness \\
        \hline
    \end{tabular}
    \label{tab:notation}
\end{table}

\subsection{Securely Aggregated Federated Learning} 
An FL system \cite{flMcMahanMRHA17} consists of multiple rounds. 
In each round $l$, the FL server \flserver sends the updated (global) model $\model$ to each client \clienti possessing private data $\idata$. 
\flserver selects a set of clients to contribute in each round.
\clienti computes its local model (updates) $\mlmsg$  over 
\idata (using the public parameter $\mlparam$) as $\Train(\mlparam,\idata)\rightarrow \mlmsg$ and later sends \mlmsg  to \flserver. 
Then, \flserver aggregates the received local models and transmits the updated model to the clients. 
This process is iterated until the model converges. 
Note that the aggregation procedure depends on the underlying structure. 
For instance, FedAvg \cite{flMcMahanMRHA17}  
averages the local updates during aggregation while FedQV \cite{fedqvtian24} aggregates using the voting strategy by exerting Quadratic Voting. 
\flserver, however, can still learn the information about \idata through inference and reconstruction attacks called \textit{privacy attacks}~\cite{FLvulner21}.\footnote{Although there are various attacks within FL, we only consider privacy attacks against clients, and client side attacks (e.g., poisoning) are considered out of scope.} 
To mitigate such attacks, a concept of \textit{secure aggregation (SA)} \cite{mcmahan2017communication} is proposed. 
The idea is to add a layer of security on \mlmsg before sending it to \flserver for aggregation. 
The SA protocol of \cite{bell2020secure}, which we implement of, consists of three phases: 
\begin{itemize}
\item 
$\SAGen(1^\lambda)\allowbreak
\rightarrow\allowbreak
(\clienti[\paramsa,\allowbreak\sarandb,\allowbreak\sakeyk,\allowbreak
\saclientspublic,\allowbreak
\{\saneighborpublic\}_{j\in \mathcal{N}(i)}],\allowbreak
\flserver[\paramsa,\allowbreak
\{\saclientspublic\}_{i\in[n]}])$: 
Each client \clienti generates: (1) a key pair (\sakeyk, \saclientspublic), (2) a random mask \sarandb, and (3) Shamir secret shares of both \sakeyk and \sarandb, denoted \sharekey and \sharerandom, shared with neighbors, denoted by $\mathcal{N}(i)$. 
The shares are end-to-end encrypted among clients with a secure authenticated encryption scheme and sent via \flserver which also collects and distributes public keys \saclientspublic among neighbors. 

\item 
$\SAProtect(\clienti[\paramsa,\allowbreak
\mlmsg,\allowbreak
\sarandb,\allowbreak
\sakeyk,\allowbreak
\saclientspublic,\allowbreak
\{\saneighborpublic\}_{j\in\mathcal{N}(i)}],\allowbreak
\flserver[\paramsa,\allowbreak
\{\saclientspublic\}_{i\in[n]}])\allowbreak
\rightarrow\allowbreak
\mlmsgprotected$: 
In the protect phase, \clienti masks its model update \mlmsg as follows: 
computes pairwise masks $\sapairwisemask := \pairwisemaskext$ for each client $C_j, \forall j \in \mathcal{N}(i)$ using Diffie-Hellman over its private key and neighbor's public key; constructs its masked update \mlmsgprotected as $
\mlmsgprotected := \mlmsg + \sarandb \pm \sum_{\substack{j \in \mathcal{N}(i) }} \sapairwisemask$;
and sends \mlmsgprotected to \flserver. 
\item 
$\SAAgg(\{\clienti[\mathsf{k}_{i},\allowbreak
\mathsf{b}_i,\allowbreak
\{\mathsf{s}^b_{i,j},\allowbreak
\mathsf{s}^k_{i,j}\}_{j \in \mathcal{N}(i)} )\}_{i\in[n]}],\allowbreak
\flserver[\{\saclientspublic,\allowbreak
\mlmsgprotected\}_{i\in[n]}])\allowbreak
\rightarrow\allowbreak
(\clienti[\model]_{i\in [n]},\allowbreak
\flserver[\model])$: 
Once \flserver identifies online clients \onlineclientset and dropouts \droppedclientset, it recovers and removes all masks by computing the aggregate. Then, it sends the aggregated result to online clients. 
Specifically, for each online neighbor $j \in \mathcal{N}(i)_o$, \clienti provides \sharerandom, while it provides \sharekey for each dropped-out neighbor $ j \in \mathcal{N}(i)_d$. 
Upon receiving \sharerandom and \sharekey,
\flserver reconstructs: 1) \sarandb 
using \sharerandom; and 2)
\saneighborkeyk and derives \sapairwisemask using \sharekey. 
Note that the server removes all masks, both self-masks and unmatched pairwise masks, and computes the final aggregated model update \model as
$
\sum_{i \in \onlineclientset} \mlmsg= \sum_{i \in \onlineclientset} \left( \mlmsgprotected - \sarandb \pm \sum_{\substack{j \in \droppedclientset}} \sapairwisemask \right). 
$ 
\flserver sends \model to $\{\clienti\} \in \onlineclientset$.
\end{itemize}
SA is assumed to be secure against the FL server/aggregator who attempts to learn individual gradients. 
Note that our approach is independent of the underlying FL SA (see Section \ref{discussion}). 

\subsection{Threshold Signatures}\label{threshold_signature}
A $(t,n)$-threshold signature scheme \cite{frost2020,lithresholdsig94} allows a group of $n$-many signers to jointly produce a single signature on the same message only if threshold-many ($t$) \textit{or more} of the signers participate in the signing process. 
The overall idea is to protect the secret signing key by splitting it into $n$ shares so that any $t$ or more members can generate a group signature using their shares. 
Later, a verifier can validate the group signature without identifying the signers. 
A threshold signature scheme consists of the following four main algorithms: 
\begin{itemize}
    \item $\TSKeyGen(\tsparam,n,t)\rightarrow(\allpk,\{\pki,\ski\}_{i\in n})$: On inputs of the public parameter \tsparam, the number of parties $n$, and a threshold $t$, \TSKeyGen generates a signing key \ski and a verification key \pki for each signer, and a global verification key \allpk. 
    \item $\TSSign(\msg,\ski)\rightarrow \sig_i$: The partial signing algorithm allows a signer to generate a signature $\sig_i$ on a pre-agreed message \msg using its signing key \ski. 
    \item $\TSSignAgg(\{\sig_i\}_{i\in[t]},\allpk)\rightarrow \sig$ : Upon receiving at least threshold $t$-many signatures generated by the signers, it returns the aggregated signature $\sig$. 
    \item $\TSVerify(\sig,\msg,\allpk) \rightarrow \top/\perp$: It outputs the verification result of a given signature \sig, for the given message \msg and the global verification key \allpk. 
\end{itemize}
A threshold signature scheme is assumed to be secure against existential unforgeability under chosen-message attacks (EUF-CMA) by showing that it is difficult for an adversary to forge signatures even when performing an adaptive chosen-message attack. Furthermore, a threshold signature scheme ensures signer indistinguishability that hides the identities of the signers.

\subsection{Oblivious Pseudorandom Function}\label{oprf}
An oblivious pseudorandom function (OPRF) \cite{oprfsok22} is a protocol between two parties, \textit{sender} and \textit{receiver}, that securely compute a pseudorandom function $\mathsf{F}_\prfkey(x)$, where $\prfkey$ is the sender's input and $x$ is the receiver's input. 
The sender learns \textit{nothing} from the interaction, and the receiver learns \textit{only} $\mathsf{F}_\prfkey(x)$. 
For our construction, \texttt{2DashDH}~\cite{oprfsok22} is deployed which is defined as $\mathsf{F}_\prfkey(x):=\hashtwo(x,(\hashone(x))^\prfkey)$ for $\prfkey \leftarrow \Zq$, where $\hashone$ and $\hashtwo$ are hash functions that map arbitrary-length strings into elements of $\{0, 1\}^\lambda$ and $\mathbb{G}$, respectively. 

\section{System Overview and Security Requirements}\label{system_overview}

\subsection{System Model}\label{architecture}
\FedPoP is inspired by the existing FL designs with intellectual property protection \cite{fedright23} and operates among: 
\textit{n-many} \textit{clients}, an FL \textit{server}, and a \textit{service provider}. 
Clients, denoted by $\clientset=\{\clienti\}_{i\in[n]}$, are the contributors of an FL service with their private data \idata. 
They compute a model generation with an FL server and receive a global model and a proof for their participation to the model as rightful owners.\footnote{Intellectual protection of their data such as \cite{freqywm22} is considered out of our scope.} 
An FL server, denoted by \flserver, is a server willing to train a model \model with the clients' private data. 
\flserver is the entity who determines the model, e.g., model parameters. 
A service provider, denoted by \posserver, possesses a model $\model'$ to provide service to its users. 
A client \clienti wishes to prove its contribution to $\model'$ held by \posserver, which it assumes corresponds to a model it previously contributed to.

\subsection{Threat Model}\label{securityproperties} 
\FedPoP is designed to be secure against a semi-honest \flserver and a semi-honest \posserver, while secure against \emph{at most} threshold-many malicious clients. 

\noindent \textbf{Malicious Clients.} \textit{At most} a threshold many malicious clients attempt either to impersonate another client or to convince an honest service provider in order to falsely claim their participation. 

\noindent \textbf{Honest-but-curious FL Server.} While \flserver faithfully follows the protocol, it attempts to deduce information about clients' private local models.

\noindent \textbf{Honest-but-curious Service Provider.} \posserver attempts to accomplish any one of the following goals: 1) identifying the client who is claiming the proof of participation; 2) linking multiple generations of models to the same client; and 3) identifying other clients involved in the generation of the model.

We assume that clients do not poison the local models, i.e., poisoning attacks \cite{poisonattack20}, and consider it as out of scope. 
For communication, we assume secure and authenticated channels between parties (i.e. between \clientset and \flserver, and between \clienti and \posserver), which can be performed using standard means such as HTTPS. 

\subsection{Security Requirements (SR)}\label{securityprivacyreq} 
Based on our threat model, we devise four security requirements. 
We briefly introduce them below while we provide their formal security game definitions in the full (anonymous) version \cite{fedpopfullversion} due to page limitation. 
In the next section, we introduce the ideal functionality definition of \FedPoP. 

\noindent\textbf{SR1. Soundness.} 
Even if at most threshold many malicious clients collude, they should not be able to 
forge a proof or convince an honest service provider to accept their (false) claim of participation. \\
\textbf{SR2. Privacy.} While \clienti proves its participation to a model, the proof shall not reveal the identities of the other clients to \posserver.\\
\textbf{SR3. Anonymity.} The identity of \clienti proving its participation shall be kept private from \posserver. 
Note that we do not consider the identification of \clienti from its internet protocol (IP) address; however, this can be prevented via anonymous communication channels such as Tor. \\ 
\textbf{SR4. Unlinkability.} An honest-but-curious \posserver shall not link multiple proofs of participation to the same client \clienti. 

\subsection{Intuition}\label{motivation} 
At a high level, our overarching goal is to devise a privacy-preserving federated learning system that supports verifiable proof of participation that is used as proof of ownership. A natural starting point is to use digital signatures atop a securely aggregated FL protocol. In this basic approach, the \flserver signs the public verification key of each participating \clienti using its private signing key. Each \clienti, equipped with its own signing and verification keys, receives a signature alongside the trained model \model. To prove participation, \clienti later sends this signature and its verification key to \posserver, and engages in a challenge-response protocol to demonstrate knowledge of the corresponding private (signing) key. 
This method, however, has key limitations. It reveals the identity of \clienti, violating anonymity; it enables linkability across multiple proofs; and it relies on a central signing authority, introducing a single point of failure. These issues motivate our construction, \FedPoP, which we develop by addressing the following challenges.
\begin{itemize}
\item \textbf{Challenge 1 [Joint Proof of Participation].}
To resist forgery and have a more robust solution, clients should collectively generate a participation proof after receiving the model. We adopt threshold signatures, which allow a subset of contributing clients to jointly sign using their individual private keys. The resulting signature is verifiable using a global verification key derived from their public keys. Unlike ring or multi-signatures, threshold signatures satisfy our functional and security goals. A detailed comparison is provided in Appendix~\ref{otherinstances}.
\item \textbf{Challenge 2 [Possession of the Private Witness].}
While threshold signatures confirm joint participation, they do not by themselves prove that an individual client knows its private key. One option is to use zero-knowledge proofs (ZKPs) to demonstrate both contribution to the global verification key and knowledge of the corresponding secret. However, ZKPs often incur high costs and struggle to preserve unlinkability, especially for large models. 
Another option is to use cryptographic accumulators, as in \cite{pflm21}, to prove inclusion in a global key set. But this approach still requires proving ownership of the private key, and exposes the list of client keys, violating anonymity (\textbf{SR2}), unlinkability (\textbf{SR3}), and privacy (\textbf{SR4}). To overcome these limitations, we introduce a fresh \textit{group witness} during joint proof of participation generation. This group witness, which is realized as a pseudorandom key, is issued only to the participating clients. It enables the creation of an auxiliary proof component using the global verification key. 
Later, when a client \clienti wishes to prove its participation, \clienti and \posserver perform an Oblivious PRF (OPRF) protocol on the key (held by \textit{only} \clienti) and group verification key (held by \posserver). By comparing the output with a precomputed value, \posserver is convinced of the client’s possession of the group witness without learning it. 
\end{itemize}

Hence, \FedPoP design integrates threshold signatures with OPRF to yield a novel, efficient, and privacy-preserving proof mechanism in federated learning. The global verification key remains unlinkable to clients, and the proof of secret (group) witness possession avoids costly zero-knowledge constructions.

\section{\FedPoP Design}
This section begins with the \FedPoP architecture followed by its formal security definition. Later, we introduce our \FedPoP instance satisfying all the aforementioned requirements.

\subsection{\FedPoP Architecture} 

We now introduce the design of \FedPoP in details. 
As depicted in Fig. \ref{fig:architecture}, \FedPoP comprises three phases: \Setup, \Generate, and \Prove. 

\begin{figure*}[ht!]
  \centering
  \includegraphics[scale=0.55]{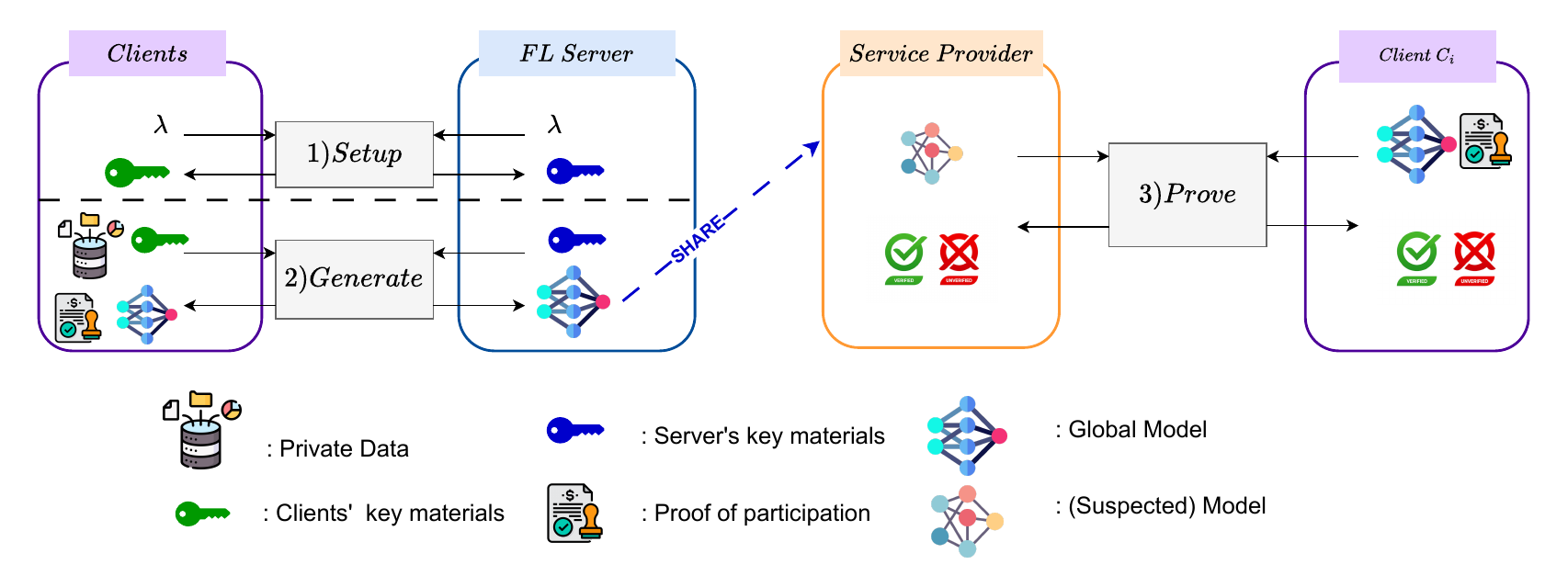}
    \caption{\FedPoP Architecture. }
    \label{fig:architecture}
\end{figure*}

\textbf{Setup Phase}. This phase is executed between \clientset (a set of $n$ many clients) and \flserver. \clientset and \flserver compute and obtain their corresponding key materials, \clientkeymaterials and \serverkeymaterials, respectively. A \clientkeymaterials consists of an individual secret witness $\secretwitness_i$ (which is kept secret by clients), a verification token $\publicwitness_i$, aggregation keys, the model parameters, and public parameters of the underlying primitives. \serverkeymaterials consists of the system public parameters, keys generated for secure aggregation, and the model parameters. 
The setup phase is shown as : $\Setup(\param)\rightarrow\clienti[\clientkeymaterials]_{i\in[n]},\flserver[\serverkeymaterials]$.

\textbf{Generate Phase}. In this phase, \clientset and \flserver engage to securely train the model on the clients' private data \idata using their corresponding key materials from \Setup. 
After the \Generate protocol execution, each (active) \clienti receives the trained model \model, a group private witness \globalwitness, a global verification token \globalpublicwitness, and a proof of participation \proofstate while \flserver receives \model and \globalpublicwitness. 
We assume that \globalpublicwitness is public information known by \textit{all the parties} in the system. 
The generate phase is shown as
$
\Generate(\{\clienti[\idata,\clientkeymaterials]\}_{i\in[n]}, \flserver[\serverkeymaterials]) \rightarrow \{\clienti[\model,\globalwitness, \globalpublicwitness, \proofstate ]\}_{i\in[n]}, \flserver[\model, \globalpublicwitness].
$

\textbf{Prove Phase}. This phase takes place between \clienti and \posserver. 
\clienti wishes to prove to \posserver that it participated in the generation of a (suspected) model possessed by \posserver, denoted by $\model'$. 
\clienti, holding $\langle \model,\globalwitness,\globalpublicwitness,\proofstate\rangle$, and \posserver, holding $\langle \model',\globalpublicwitness'\rangle$, jointly compute the \Prove phase. 
At the end of the protocol, \posserver returns $1$ if it is convinced by the proof given by \clienti, without knowing its identity, and \clienti and \posserver without knowing each other's participating models, i.e., $\model$ and $\model'$. 
Otherwise, it returns $0$. 
The phase is shown as $\Prove(\clienti[\model,\proofstate, \globalwitness],\posserver[\model',\globalpublicwitness]) \rightarrow \clienti[0/1], \posserver[0/1]$.

\subsection{Security Definition}\label{sec:securitydefinition} 
We formally define the problem of proof of participation as overviewed previously via the ideal-real world paradigm \cite{canetti2000}. 

\textbf{Ideal World.} An ideal functionality is defined to describe the security properties expected by a system. 
The ideal functionality is an ideal trusted party that performs the task expected by the system in a trustworthy manner. 
When devising an ideal functionality, one describes the ideal properties that the system should achieve, as well as the information that the system will inherently leak.
\clientset, \flserver, and \posserver with their inputs wish to compute \FedPoP by engaging with the ideal functionality, denoted by \Fpop as formally defined below.

\begin{functionality}[label=pomfunc]{\Fpop}
Parameterized with a set of clients $\mathcal{C}=\{\clienti\}_{i\in [n]}$, a federated learning server \flserver, and a service provider server \posserver, \Fpop stores all the messages and works as follows, where $\mlparam$ is a (initial) model parameter determined by \flserver and \idata is the data belonging to \clienti: 
\begin{itemize}
\item  Upon receiving $(\Setup,\{\IDi\}_{i\in [n]})$ from each $\clienti \in \mathcal{C}$,
\begin{itemize}
    \item If \IDi exists, abort and return $\langle \abort \rangle$ to \clienti. 
\end{itemize}
Otherwise:
\begin{itemize}
    \item Generate an individual secret witness and verification token $( \secretwitness_i, \publicwitness_i)$ and send $\initok$ to \clienti. 
\end{itemize}
\item Upon receiving $\langle \Generate,l,\{\idata\}_{i\in [n]}\rangle$ from all the clients for a given state (time/round) $l$ and $\langle \Generate,l,\mlparam\rangle$ from \flserver,
 \begin{itemize}
 \item Compute a global model $\model^l$ based on $\{\idata\}_{i\in [n]}$ and \mlparam, a proof $\proofstate^l$, a group secret witness \globalwitness, and a global verification token $\globalpublicwitness^l$.
    \item Send $\langle \model^l,\proofstate^l,\globalwitness,\globalpublicwitness^l\rangle $ to each \clienti and \flserver.  
 \end{itemize}
\item Upon receiving  (\Reveal, $l$) from \flserver for a given state $l$, send the model and its corresponding global verification token $\langle \model^l,\globalpublicwitness^l\rangle$ to \posserver.
 
\item Upon receiving  $\langle \Prove,\model^l,\proofstate^l, \globalwitness\rangle$ from a client \clienti and $\langle \Prove,\model',\globalpublicwitness\rangle$ from \posserver:
\begin{itemize}
    \item Send $1$ to \clienti and \posserver if $\model^l = \model'$ and $(\clienti,\model^l,\proofstate^l,\globalpublicwitness^l,\globalwitness)$ is in the record; otherwise, send $0$ to \clienti and \posserver. 
\end{itemize}
\end{itemize}
\end{functionality}

\textbf{Real World.} The real world consists of \clientset, \flserver, and \posserver. 
The parties are involved in the real world execution of a \FedPoP instance, denoted by \Ppop. 
Note that there is no universal trusted entity as \Fpop for a real world \FedPoP protocol. 

\begin{definition}\label{def:pomsecdef}
Let \Ppop be a PPT protocol for federated learning with proof of participation. 
We say that \Ppop is secure if for every non-uniform PPT real-world adversary $\Adv$, there exists a non-uniform PPT ideal world simulator \Dist such that all transcripts among parties and outputs between the real and ideal worlds are computationally indistinguishable;

$$\Ideal_{\Fpop}^{\Dist(aux)}(\{\idata,\secretwitness_i,\publicwitness_i\}_{i\in[n]},\model,\model', \proofstate,\globalwitness,
\globalpublicwitness,\lambda)  \nonumber
	   \approx_c  \nonumber $$ $$
	 \Real_{\Ppop}^{\Adv(aux)}(\{\idata,\secretwitness_i,\publicwitness_i\}_{i\in[n]},\model,\model', \proofstate,\globalwitness, \globalpublicwitness,\lambda) \nonumber $$
	\label{def:pom} 
\noindent where $aux \in \{0,1\}^*$ is the auxiliary input and $\lambda$ is the security parameter.
\end{definition}

\section{Our Proposed \FedPoP Instance}\label{fedpopinstance}
We now introduce our \FedPoP construction leveraging well-known cryptographic primitives considering the aforementioned key features and requirements. 
We assume that \Setup and \Generate phases executed per round $l$ as below. 
In Section \ref{discussion}, we discuss multi-round and possible approaches to increase the efficiency. 

\noindent \textbf{Setup Phase}. 
During \Setup phase, the following key generation algorithms are computed. 
For the sake of concreteness, we assume that setup algorithms are instantiated securely. 
This can be achieved with a suitable distributed key generation (DKG) protocol~\cite{kate2009distributed} (e.g., the DKG protocol \cite{dkgCritesKM21} for threshold signatures) or simply with a trusted dealer. 
Our approach is independent of the specific key generation methods, as long as the resulting keys satisfy some basic correctness and security conditions. 
For simplicity, we assume a trusted dealer. 
However, solutions such as FLAMINGO \cite{ma2023flamingo} could be deployed for SA in order to eliminate the necessity of per round setup. \\ 
\noindent $1)$ \textit{\textbf{Threshold signature key generation}}: $\TSKeyGen(\tsparam,n,t) \rightarrow(\allpk,\{\pki,$ $\ski\}_{i\in [n]})$ where each \clienti receives the global verification key \allpk and their own verification and secret signing keys $\langle\pki,\ski\rangle$ while \flserver receives only \allpk. 
Threshold $t$ is defined as $n - n_{drop}$ where $n_{drop}$ represents the maximum number of dropped-out clients the system can tolerate; thus, $t$ can be determined by \flserver. \\
\noindent $2)$ \textit{\textbf{Secure aggregation setup}}: \clientset and \flserver compute the setup as $\SAGen(1^\lambda)$ $\rightarrow (\clienti[\paramsa,\sarandb,\sakeyk,\saclientspublic,\{\saneighborpublic\}_{j\in \mathcal{N}(i)}], \flserver[\paramsa,\{\saclientspublic\}_{i\in[n]}]$ where \clienti receives $\langle \paramsa,\sarandb,\sakeyk,$ $\saclientspublic,\{\saneighborpublic\}_{j\in \mathcal{N}(i)} \rangle$ and the public parameter \paramsa while \flserver receives $\langle \paramsa,\{\saclientspublic\}_{i\in[n]}\rangle$. 
As a result, each \clienti receives $\langle \paramsa, \sarandb,\sakeyk,\saclientspublic,\{\saneighborpublic\}_{j\in \mathcal{N}(i)},$ $\pki,\ski,\allpk\rangle$ as its \clientkeymaterials and stores them securely. 
\flserver receives $\langle \paramsa,\{\saclientspublic\}_{i\in\clientset},\allpk \rangle$ as its \serverkeymaterials.

\begin{figure}[ht!]
\hspace*{-2.3cm}
  \includegraphics[scale=0.2]{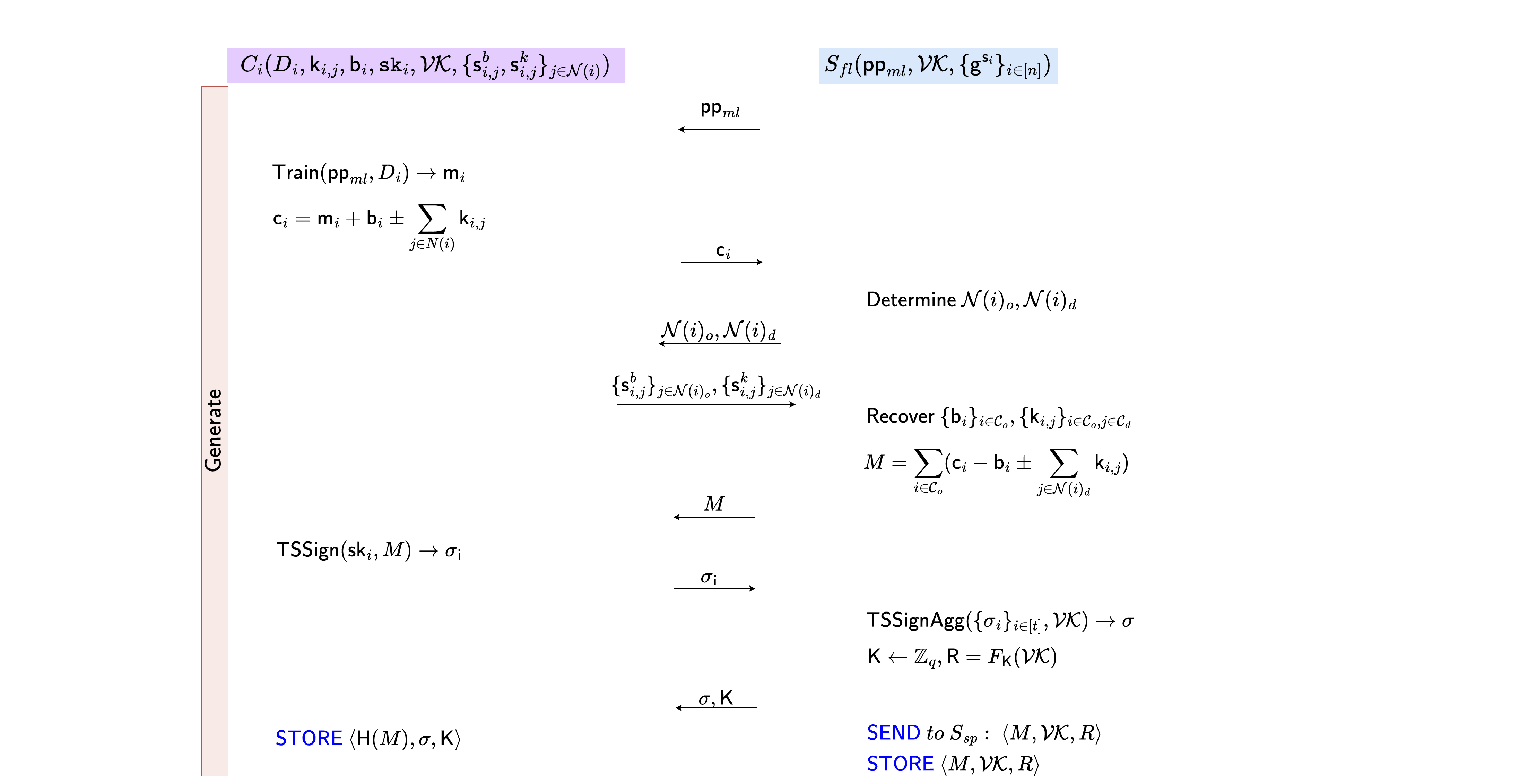}%
    \caption{ \Generate Phase.}
    \label{fig:fedpopGenerate}
\end{figure}
\noindent \textbf{Generate Phase}. 
In \Generate phase as illustrated in Fig. \ref{fig:fedpopGenerate}, \flserver sends model parameters \mlparam to each client \clienti. 
Upon receiving \mlparam, the clients generate their local models \mlmsg using their private data \idata as $\Train(\mlparam,\idata)\rightarrow \mlmsg$.  

Each client \clienti then blinds its local model \mlmsg by executing the \SAProtect function using a random self-mask \sarandb and pairwise masks \sapairwisemask for each neighbor $j$ ($j \in \mathcal{N}(i)$), resulting in a protected local model \mlmsgprotected:
$
\mlmsgprotected = \mlmsg + \sarandb \pm \sum_{\substack{j \in \mathcal{N}(i) }} \sapairwisemask$. 
\clienti sends \mlmsgprotected to \flserver. 
After a timeout period determined by \flserver, the server classifies clients whose \mlmsgprotected messages have been received as online clients, denoted by \onlineclientset, and labels the rest as dropped clients, denoted by \droppedclientset.
\flserver then determines the sets of online and dropped neighbors for each client, $\mathcal{N}(i)_o$ and $\mathcal{N}(i)_d$, and shares this information with each client \clienti. 
Upon receiving $\mathcal{N}(i)_d$ and $\mathcal{N}(i)_o$, for each online neighbor $j \in \mathcal{N}(i)_o$, \clienti provides \sharerandom to allow \flserver to recover and remove the self-mask \sarandb and \sharekey for each dropped out neighbor $ j \in \mathcal{N}(i)_d$. 
\flserver receives the corresponding shares and reconstructs individual masks \sarandb for \onlineclientset and pairwise masks \sapairwisemask for \droppedclientset. 
The server then removes all masks, both self-masks and unmatched pairwise masks, and computes the final aggregated model update \model as
$
\model = \sum_{i \in \onlineclientset} \mlmsg= \sum_{i \in \onlineclientset} \left( \mlmsgprotected - \sarandb \pm \sum_{\substack{j \in \droppedclientset}} \sapairwisemask \right). 
$ 
\flserver then sends the aggregated \model back to \onlineclientset. 
Upon receiving \model, each \clienti generates a partial signature $\sig_i$ on \model as $\TSSign(\model,\ski)\rightarrow \sig_i$. 
The clients send $\langle \sig_i\rangle$ to \flserver. 
Upon receiving at least threshold $t$-many partial signatures $\{\sig_i\}_{i\in[t]}$, \flserver aggregates $\{\sig_i\}_{i\in[t]}$ to produce the aggregated signature $\sig$ as $\TSSignAgg(\{\sig_i\}_{i\in[t]},\allpk)\rightarrow \sig$. 
Additionally, \flserver generates a PRF key as $\prfkey \leftarrow \Zq$ which is a group witness \globalwitness enabling only participated clients to prove their participation in the \Prove phase. \footnote{\prfkey can be also generated using \secretwitness's of \onlineclientset as detailed in Appendix \ref{sec:prfgeneration_alternative}.}
\flserver sends $\langle\sig,\prfkey\rangle$ to the clients and shares $\langle \model,\allpk,\prfvalue \rangle $ with \posserver where $\prfvalue=F_{\prfkey}(\allpk)$. 
\flserver stores the tuple of $\langle \model,\allpk,\prfvalue \rangle $ in which $\globalpublicwitness=\{\allpk,\prfvalue\}$. 
\clientset stores $\langle \model, \prfkey, \sigma\rangle$ where $\proofstate=\{\sigma\}$ and 
$\globalwitness=\prfkey$. 
However, note that the model storage could grow unexpectedly due to the multiple FL contributions of the clients. 
Therefore, alternatively \clientset can store the hash value of \model rather than \model itself together with \prfkey and \sig.

\begin{figure}[htb]
\hspace*{-0.4cm}
\includegraphics[
    scale=0.3,
    trim={7cm 0 5cm 0},  
    clip
  ]{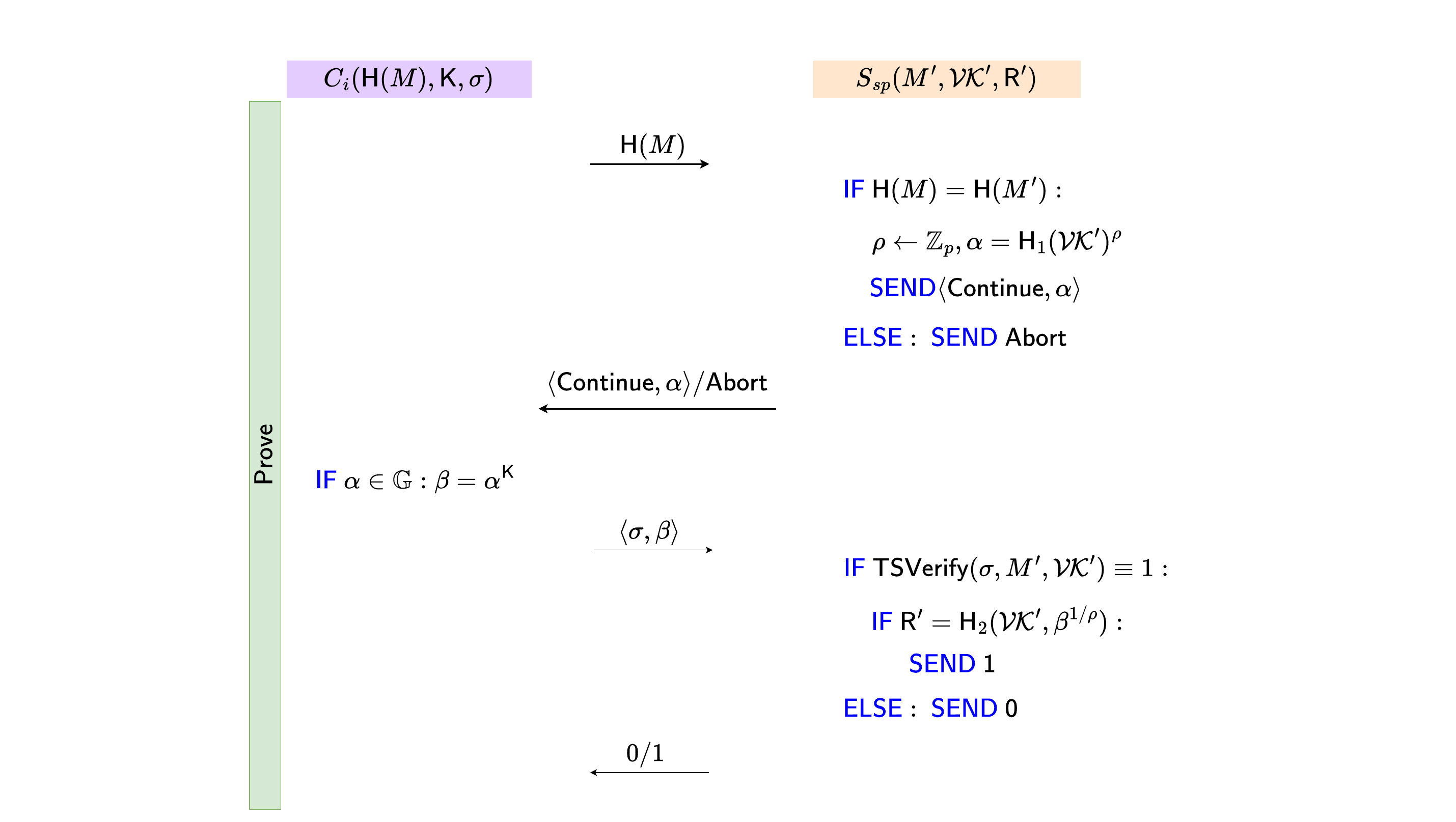}
  \caption{\Prove Phase.}
  \label{fig:fedpopProve}
\end{figure}

\noindent \textbf{Prove Phase.} 
\Prove phase as illustrated in Fig. \ref{fig:fedpopProve} is taken place between \clienti and \posserver. 
\clienti's inputs are the hash value of \model (or \model itself), the signature $\sig$ as \proofstate, and the PRF key \prfkey as \globalwitness. 
\posserver's inputs are a model $\model'$, a global verification key $\allpk'$, and the PRF value $\prfvalue'$ (which is $\mathsf{F}_\prfkey(\allpk')$). 
\clienti sends $\hash(\model)$ to \posserver. 
To check if \model is indeed possessed by \posserver, \posserver compares $\hash(\model)$ with $\hash(\model')$. 
If this check is successful meaning that \posserver possesses \model of \clienti, \clienti and \posserver compute the OPRF protocol where \clienti participates  with its input $\prfkey$ acting as a sender and \posserver participates with its input $\allpk'$ acting as a receiver. 
Hence, \posserver picks $\rho \leftarrow \Zp$ and computes $\alpha = \hashone(\allpk')^\rho$. 
\posserver sends $\alpha$ to \clienti. 
If $\alpha \in \G$, \clienti computes $\beta=\alpha^\prfkey$ and sends $\langle \sig, \beta\rangle$ to \posserver.  
\posserver first checks if \sig verifies under $\allpk'$ by computing  $\TSVerify (\sig,\model',\allpk')$. 
If the signature verification is successful, \posserver computes $\hashtwo(\allpk',\beta^{\frac{1}{\rho}})$ and checks if the result equals to $\prfvalue'$. 
If this final check is also successful, \posserver outputs $1$ which indicates that \clienti indeed participated to the generation of $\model'$. 
It outputs $0$, otherwise. 

\section{Analysis}\label{sec:analysis_section}
\subsection{Security Analysis}\label{security}
This section discusses how \FedPoP satisfies the security requirements described in Section \ref{securityprivacyreq}. 
A probabilistic polynomial time (PPT) adversary \Adv corrupts a client while \posserver and \flserver are semi-honest. \Adv cannot corrupt all the parties at the same time, e.g., \Adv cannot corrupt \posserver and \flserver or \clienti and \posserver, simultaneously. 
The security analysis adopts the ideal/real-world methodology. In the \emph{ideal world}, a simulator $\Sim$ interacts with the functionality $\Fpop$ and in the \emph{real world}, it interacts with the actual protocol $\Pi$ and an adversary \Adv corrupting $\posserver$ or $t-1$ clients. The goal is to show that, for every PPT adversary, the view produced in the real world is computationally indistinguishable from the view produced in the ideal world. The argument proceeds via a short sequence of hybrid games ($H$). 
Since honest-but-curious \flserver runs a secure aggregation with clients, and partial signatures \flserver receives from the clients does not have information about individual $\mlmsg$, \FedPoP is secure against honest-but-curious \flserver. Therefore, we omit its formal proof. 

\begin{theorem}\label{thm:soundnessproof}
Our \FedPoP is secure according to Definition \ref{def:pomsecdef} against any non-uniform PPT adversary \Adv corrupting $t-1$ clients denoted by $\clientset_c$ assuming that the threshold signature is secure, the oblivious pseudorandom function is secure, and
the hash function is collision-resistant.
\end{theorem}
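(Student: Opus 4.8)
The plan is to prove Theorem~\ref{thm:soundnessproof} via the ideal/real-world simulation paradigm, constructing an explicit PPT simulator $\Sim$ for the adversary \Adv that corrupts the set $\clientset_c$ of at most $t-1$ clients, and then arguing indistinguishability through a sequence of hybrid games $H_0, H_1, \dots$. The starting hybrid $H_0$ is the real-world execution $\Real_{\Ppop}^{\Adv(aux)}$; the final hybrid is the ideal-world execution $\Ideal_{\Fpop}^{\Sim(aux)}$. I would organize the proof around the three security requirements that are non-trivial here (soundness \textbf{SR1}, anonymity \textbf{SR3}, unlinkability \textbf{SR4}) plus privacy (\textbf{SR2}), since the paper already dismisses the honest-but-curious \flserver case. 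For the \flserver-corruption case and \posserver-corruption case I would give short standalone simulator arguments; the bulk of the work is the client-corruption case.

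First I would describe $\Sim$'s behavior phase by phase. In \Setup, $\Sim$ plays the trusted dealer: it runs $\TSKeyGen$ honestly, handing \Adv the keys $\{\pki,\ski\}_{i\in\clientset_c}$ for corrupted clients and \allpk, and simulates the \SAGen outputs for corrupted clients using the SA simulator guaranteed by security of \cite{bell2020secure}. In \Generate, $\Sim$ forwards the adversary's corrupted-client inputs \idata to \Fpop, receives $\langle\model^l,\proofstate^l,\globalwitness,\globalpublicwitness^l\rangle$, and must produce a transcript consistent with these: it simulates the SA aggregation messages (again via the SA simulator, which is what lets us drop \flserver's internal gradient leakage), it produces partial signatures $\sig_i$ for honest clients so that $\TSSignAgg$ yields the $\sig$ inside $\proofstate^l$ — here I would invoke the EUF-CMA security and signer-indistinguishability of the threshold scheme so that $\Sim$'s simulated partials are indistinguishable from real ones even though $\Sim$ does not possess the honest clients' signing keys in the way a real honest client would (alternatively, since $\Sim$ acts as dealer it does hold all key shares, simplifying this step). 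In \Prove, when \Adv runs a corrupted \clienti against an honest \posserver, $\Sim$ extracts \Adv's effective inputs $\langle\model,\proofstate,\globalwitness\rangle$, queries \Fpop's \Prove interface, and returns the bit; the key point is that to make \posserver output $1$ on a forged claim, \Adv must either (a) produce a $\sig$ verifying under $\allpk'$ on a model it did not sign — breaking EUF-CMA, since \Adv holds only $t-1 < t$ shares — or (b) produce $\beta$ such that $\hashtwo(\allpk',\beta^{1/\rho}) = \prfvalue' = \hashtwo(\allpk', \hashone(\allpk')^{\prfkey})$ without knowing \prfkey, which by collision-resistance of $\hashtwo$ forces $\beta^{1/\rho} = \hashone(\allpk')^{\prfkey}$, contradicting OPRF security (pseudorandomness of $\mathsf{F}_{\prfkey}$, which hides \prfkey from a receiver-only view) — or (c) collide $\hash$ on $\model \neq \model'$.

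The hybrid argument then walks from $H_0$ to the ideal world: $H_1$ replaces real SA messages with simulated ones (indistinguishable by SA security against the semi-honest-or-$(t{-}1)$-client adversary); $H_2$ replaces honest partial signatures and the aggregate with simulated ones (indistinguishable by threshold signer-indistinguishability); $H_3$ replaces the honestly computed $\prfvalue=\mathsf{F}_{\prfkey}(\allpk)$ and the honest \clienti's OPRF response $\beta$ with values produced without \prfkey's being used as the real protocol uses it (indistinguishable by OPRF/2DashDH security under DDH in \G, and this is exactly what buys anonymity and unlinkability — $\allpk$ carries no client identity by signer-indistinguishability, and $\prfvalue$ across rounds looks like fresh randomness so multiple proofs are unlinkable); $H_4$ replaces real \Prove outputs by the bit returned from \Fpop, with the gap bounded by the forgery/collision events above. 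Summing the negligible gaps gives computational indistinguishability, establishing Definition~\ref{def:pomsecdef}.

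The main obstacle I anticipate is the \Prove-phase soundness step, specifically extracting \Adv's "effective" claimed inputs and arguing that $\Sim$'s \Fpop query yields the same bit \posserver would output in the real protocol except with negligible probability. This requires carefully reducing the event "\posserver accepts but $(\clienti,\model^l,\proofstate^l,\globalpublicwitness^l,\globalwitness)$ is not in \Fpop's record" to one of the three hardness assumptions, and in particular handling the subtlety that \Adv sees $\alpha = \hashone(\allpk')^{\rho}$ and could try to maul it — so the reduction to 2DashDH pseudorandomness must embed the OPRF challenge so that \Adv's $\beta$ becomes a forgery/evaluation-query violation. A secondary delicate point is the consistency of $\globalwitness=\prfkey$ being known to \emph{all} participating clients (hence to \Adv via a corrupted \clienti): this means soundness cannot rely on \prfkey's being secret from the corrupted set, so the $\beta$-check alone is not sound — it is the conjunction with the threshold signature check (which \Adv cannot pass with only $t-1$ shares) that closes the gap, and the proof must make this conjunction explicit.
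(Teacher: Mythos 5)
Your proposal is correct and follows essentially the same route as the paper's proof: an explicit simulator for the $t-1$ corrupted clients followed by a hybrid sequence that successively invokes secure-aggregation security, threshold-signature security (EUF-CMA and signer-indistinguishability), and OPRF security, with hash collision-resistance covering the $\hash(\model)$ check. Your additional observations --- the explicit case analysis showing that a \Prove-phase forgery must break one of the three assumptions, and the remark that soundness rests on the threshold-signature check rather than on secrecy of \prfkey (which corrupted participants legitimately hold) --- are refinements the paper's proof treats only implicitly inside the simulator's Case~2.1/2.2 branching, but they do not change the argument's structure.
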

\begin{proof} 
The simulator \Sim simulates honest parties which are the \flserver, \posserver, and $(n-t+1)$-many clients, denoted by $\clientset_h$ in the real world and malicious parties in the ideal world which are $t-1$ many clients denoted by $\clientset_c$. Note that \Sim stores all the data it receives, generates, and sends in its database. \Sim behaves as follows: 

\noindent \textbf{Setup Phase:}
\begin{itemize}
    \item \Sim Sends $(\Setup,\IDi)$ to $\Fpop$. If the response is $\langle \abort \rangle$, forward it to $\Adv$. Otherwise, it receives $\initok$ from $\Fpop$.
    \item \Sim together with \Adv computes the threshold signature key generation algorithm resulting $(\allpk', \{\pki',\ski'\}) \leftarrow \TSKeyGen$ where \Adv receives $(\allpk', \{\tilde{\pki}',\tilde{\ski}'\}_{i \in \clientset_c})$ and \Sim $(\allpk', \{\pki',\ski'\}_{i\in \clientset_h})$,$\paramsa, \sarandb,\sakeyk,\{\sapairwisemask\}_{j\in \mathcal{N}(i)}, \prfkey \leftarrow \mathbb{Z}_q$, 
    $\langle \paramsa,\sarandb,\sakeyk,$ $\saclientspublic,\{\saneighborpublic\}_{j\in \mathcal{N}(i)} \rangle$ and the public parameter \paramsa while \flserver receives $\langle \paramsa,\{\saclientspublic\}_{i\in[n]}\rangle$.  
    \item \Sim stores all the values in its database.
\end{itemize}
\textbf{Generate Phase:}
\begin{itemize} 
\item \Sim sends $\{\idata'\}_{i\in\clientset_h}$ to \Fpop.
 \item Upon receiving $(\model,\proofstate=\sig,\globalwitness = \prfkey,\globalpublicwitness=$ $\{\allpk,\prfvalue\})$ from \Fpop, \Sim sends $\mlparam'$ to \Adv in the real world and computes $\Train(\mlparam',\idata')\rightarrow \mlmsg'$ and for each neighbor $j$ ($j \in \mathcal{N}(i)$), resulting in a protected local model \mlmsgprotected: $\mlmsgprotected' = \mlmsg' + \sarandb \pm \sum_{\substack{j \in \mathcal{N}(i) }} \sapairwisemask$. 
  \item Upon receiving $\{\tilde{\mlmsgprotected}\}_c$ from \Adv ($\clientset_c$), \Sim sends $\mathcal{N}(i)_d$ and $\mathcal{N}(i)_o$ to \Adv. \Adv provides \sharerandom to allow \Sim to recover and remove the self-mask \sarandb and \sharekey for each dropped out neighbor $ j \in \mathcal{N}(i)_d$. \Sim receives the corresponding shares and reconstructs individual masks \sarandb for \onlineclientset and pairwise masks \sapairwisemask for \droppedclientset. \Sim computes the final aggregated model update \model as $\model' = \sum_{i \in \onlineclientset} \mlmsg= \sum_{i \in \onlineclientset} \left( \mlmsgprotected - \sarandb \pm \sum_{\substack{j \in \droppedclientset}} \sapairwisemask \right)$. 
    \item \Sim sends $\model'$ to \Adv. 
    \item Upon receiving $\{\tilde{\sig_i}\}_c$ from \Adv, \Sim generates a signature for the rest $\TSSign(\model',\ski')\rightarrow \sig_i'$ and then $\TSSignAgg(\{\sig_i'\}_h \cup \{\tilde{\sig_i}\}_c,$ $\allpk')\rightarrow \sig'$. Finally, \Sim generates a OPRF key $\prfkey' \leftarrow \Zq$ from the same PRF key distribution.
    \item \Sim sends $\langle \sig', \prfkey' \rangle$ to \Adv. 
\end{itemize}
\textbf{Prove Phase:}
Upon receiving $\tilde{h}$ from \Adv, \Sim checks if $\tilde{h}= \hash(\model')$ and proceed with one of the following cases depending on the result of the comparison: 
\begin{itemize}
    \item \textbf{Case 1.} $\tilde{h}\neq \mathsf{H}(\model')$: \Sim sends $\langle M',\proofstate=\{\sig\}\rangle$ to \Fpop and receives $0$ the result from \Fpop and forwards \texttt{Abort} ($0$) to \Adv.
    \item \textbf{Case 2.} $\tilde{h}= \mathsf{H}(\model')$: \Sim generates $\rho' \leftarrow \Zp$ and computes $\alpha' = \hashone(\allpk)^{\rho'}$.
    \begin{itemize}
        \item \Sim sends $\alpha'$ to \Adv. 
        \item \Sim receives $\langle \tilde{\sig},\tilde{\beta}\rangle$ from \Adv.
        \item There are two sub cases to consider after receiving $\langle \tilde{\sig},\tilde{\beta}\rangle$: 
        \begin{itemize}
            \item \textbf{Case 2.1}: If $\tilde{\sig} \in DB$ and $F_{\prfkey}(\allpk)=\hashtwo(\allpk,\tilde{\beta}^{\frac{1}{\rho'}})$, then 
            \Sim sends $\langle M,\proofstate=\{\sig\}\rangle$ to \FedPoP. 
            \item \textbf{Case 2.2}: Otherwise, \Sim sends $\langle M',\sig'\rangle$ to \FedPoP.
        \end{itemize}
   Whichever the above cases occur, \Sim forwards the result sent by \FedPoP to \Adv.
    \end{itemize}
\end{itemize}
\end{proof}

\begin{claim}
    The view of \Adv, corrupting $t-1$ many clients, in their interaction with \Sim is indistinguishable from the view in their interaction with the real-world honest parties. 
\end{claim}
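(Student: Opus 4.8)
The plan is to prove the claim by a standard hybrid argument that starts from the real-world execution $H_0$, in which $\Adv$ drives the $t-1$ corrupted clients $\clientset_c$ against the honest $\flserver$, $\posserver$, and honest clients $\clientset_h$, and ends at the hybrid in which $\Adv$ interacts with $\Sim$ exactly as defined in the proof of Theorem~\ref{thm:soundnessproof}. I would introduce one intermediate hybrid per primitive whose outputs $\Sim$ fakes and bound each consecutive pair by a reduction to the corresponding assumption; since there are only $O(1)$ hybrids, summing the bounds yields a negligible overall distinguishing advantage.

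The transition $H_0 \to H_1$ replaces the honest clients' secure-aggregation transcript — their protected updates $\mlmsgprotected$ and the Shamir shares they release during $\SAAgg$ — with the ones $\Sim$ derives from dummy inputs $\idata'$. Because $\Adv$ corrupts only $t-1 < t = n - n_{drop}$ clients, it never reconstructs the self-mask $\sarandb$ or the pairwise masks of any honest client, so each honest $\mlmsgprotected$ is pseudorandom in $\Adv$'s view; $H_0 \approx_c H_1$ then follows from the security of the underlying SA protocol and the semantic security of the authenticated encryption used to distribute shares, as claimed for an honest-but-curious $\flserver$, and since $\Sim$ learns the true aggregate from $\Fpop$ the model value it delivers to $\Adv$ is distributed as in $H_0$. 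Next, $H_1 \to H_2$ has $\Sim$ simulate the (trusted-dealer) $\TSKeyGen$ and itself produce the honest partial signatures and the aggregate $\sig'$; by correctness and signer-indistinguishability of the threshold scheme, the tuple $(\allpk', \{\pki',\ski'\}_{i\in\clientset_c}, \sig')$ that $\Adv$ observes is indistinguishable from the real one, because $t-1$ key shares plus a signature on the final model reveal nothing about which honest subset supplied the remaining shares. Finally, $H_2 \to H_3$ makes $\Sim$ use the group witness $\globalwitness = \prfkey$ handed to it by $\Fpop$ consistently — both in $\Adv$'s view and in the later $\Prove$-phase check; this is statistically neutral because $\prfkey$ is uniform and independent of everything else in $\Adv$'s view, while pseudorandomness of $\mathsf{F}$ ensures the verifier's precomputed value $\prfvalue'$, which $\Sim$ reconstructs, looks random.

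The last transition, $H_3 \to H_4$, simulates the $\Prove$ phase. $\Sim$ samples $\rho' \leftarrow \Zp$ and sends $\alpha' = \hashone(\allpk')^{\rho'}$, which by the receiver privacy of the \texttt{2DashDH} OPRF — the message $\hashone(x)^{\rho}$ is computationally independent of $x$, which $\Sim$ does not know — is distributed as the real first OPRF message; $\Sim$ then must reproduce the honest verifier's accept/reject bit via its decision rule (Cases~1, 2.1, 2.2). I expect this alignment to be the main obstacle: one must show that $\Sim$'s bit coincides with $\Fpop$'s output, i.e., that real acceptance happens exactly when the record contains $(\clienti, \model, \proofstate, \globalpublicwitness, \globalwitness)$. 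For the model check this uses collision resistance of $\hash$, since $\tilde h = \hash(\model')$ implies the claimed model is $\model'$ except with negligible probability, so Case~1 correctly rejects a mismatched claim. For the signature check it uses EUF-CMA security of the threshold scheme under $\allpk'$: $\Adv$, holding only $t-1$ shares, cannot output a $\tilde\sig$ that passes $\TSVerify(\cdot,\model',\allpk')$ unless it is the genuinely issued $\sig'$ already in $\Sim$'s database, so Case~2.1 fires exactly when the real $\posserver$ would accept on the signature component, and Case~2.2 matches a real rejection. For the OPRF check, $\hashtwo(\allpk', \tilde\beta^{1/\rho'}) = \prfvalue'$ holds in the real world iff $\Adv$ raised the received $\alpha'$ to the correct exponent $\prfkey$, which is precisely what $\Sim$ tests using $\prfkey$ and $\allpk$. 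Chaining these three sub-claims bounds the probability that $\Sim$'s output differs from the real verifier's by a negligible amount, and together with the earlier hybrids this establishes the claim.
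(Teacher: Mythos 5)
Your proposal is correct and follows essentially the same hybrid decomposition as the paper's proof: first replace the honest clients' secure-aggregation transcript, then the threshold-signature keys and aggregated signature, then the PRF key and OPRF values, bounding each hop by the corresponding assumption (SA security, threshold-signature security, OPRF security) exactly as the paper's games $G_0$ through $G_3$ do. The one addition is your final hybrid making explicit that \Sim's accept/reject bit coincides with \Fpop's output via collision resistance of \hash, EUF-CMA under $\allpk'$, and the OPRF consistency check --- a step the paper folds implicitly into the simulator's Case~1/2.1/2.2 analysis rather than treating as a separate game, so this is a clarification rather than a divergence.
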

We argue via a standard sequence of hybrid games $G_i$. $G_0$ is the real execution of \FedPoP. 
In each subsequent hybrid we replace exactly one step of the real protocol with its simulated counterpart, changing only the adversary's view. 
The final hybrid $G_i$ is the execution of the simulator \Sim defined above, which operates without the honest parties' actual inputs. 
For every adjacent pair ($G_{i-1}, G_i$) we provide a reduction showing computational indistinguishability under the stated assumptions. 
We now detail each hybrid and its corresponding reduction: 

\begin{itemize}
  \item \textbf{Game $G_0$:} In this game, we use the inputs of honest parties as the inputs of our simulation. This game is identical to real-world \FedPoP protocol. 
  \item \textbf{Game $G_1$:} This game is the same as $G_0$ except when running the SA protocol, \Sim used a different \mlparam, \idata, $\sarandb$, and $\sapairwisemask$ which are respectfully derived from their space from the ideal world, and it resulted in $\model'$. 
  \Adv can distinguish this behavior from $G_0$ only if it can break the secure aggregation confidentiality or the collusion resistant hash function. However, secure aggregation security hides the true values using random key values and prevents \Adv from learning $\idata'$ which is used instead of true data while hash function ensures that \Adv cannot cheat while sending $\tilde{h}$ since the only way for \Adv to succeed is to break the security of hash function. Hence, $G_1$ is computationally indistinguishable from $G_0$. 
  \item \textbf{Game $G_2$:} This game is the same as $G_1$ except this time the real signing and verification keys are replaced each with fresh threshold signature keys $\ski',\pki'$ sampled from the same key distribution. Then using these fresh keys, we replace  $(\model,\sig)$ received from $\Fpop$ with locally computed $(\model', \sig')$. 
  $G_2$ is indistinguishable from $G_1$ thanks to threshold signature scheme. Threshold signature scheme ensures that only a threshold-many clients can generate a valid signature (as a proof of participation) ensured by the security against EUF-CMA, and the keys are high entropy giving \Adv a negligible advantage to guess \ski.
  \item \textbf{Game $G_3$:} In this final game, we replace PRF key $\prfkey$ with an independent random key $\prfkey'$ and consequentially $\prfvalue'$ is generated based on $\prfkey'$ and $\allpk'$. During the \Prove phase, instead of $\allpk'$, \Sim uses \allpk while running OPRF with \Adv. 
  $G_3$ is indistinguishable from $G_2$ due to OPRF security. \Adv needs to break the OPRF security in order to learn if \Sim used $\allpk'$. Hence, OPRF ensures the pseudorandom key cannot be distinguished by \Adv. 
\end{itemize}

Given the games above, $ G_0 \approx_c G_1 \approx_c G_2 \approx_c G_3$ proves that the real-world execution is computationally indistinguishable from the ideal world. Hence, with this proof, we show that our \FedPoP instance satisfies the soundness requirement.

\begin{theorem}\label{thm:malicious_sp}
Our \FedPoP is secure according to Definition \ref{def:pomsecdef} against any non-uniform PPT adversary \Adv corrupting the service provider \posserver assuming that the threshold signature is secure, the oblivious pseudorandom function is secure, and the hash function is collision-resistant.
\end{theorem}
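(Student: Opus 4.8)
The plan is to mirror the ideal/real argument of Theorem~\ref{thm:soundnessproof}, but with $\Adv$ now corrupting the service provider $\posserver$ instead of a set of clients. I would construct a PPT simulator $\Sim$ that, in the ideal world, plays the corrupted $\posserver$ against $\Fpop$, and in the real world plays \emph{all} honest parties — $\flserver$ and every client in $\clientset$ — against $\Adv$. Because $\posserver$ is honest-but-curious, $\Sim$ never has to extract an input from a deviating party; it only has to hand $\Adv$ a transcript distributed as in $\Ppop$ while routing the correct \Prove outcome through $\Fpop$.

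\emph{Simulator.} In \Setup, $\Sim$ runs the honest-party key generation locally — fresh threshold keys $(\allpk,\{\pki,\ski\}_{i\in[n]})\leftarrow\TSKeyGen$, secure-aggregation material from $\SAGen$, and a fresh group witness $\prfkey\leftarrow\Zq$ — and issues $(\Setup,\IDi)$ to $\Fpop$ for each honest client. In \Generate, for a round $l$ it forwards the honest clients' $(\Generate,l,\idata)$ and $\flserver$'s $(\Generate,l,\mlparam)$ to $\Fpop$; when $\flserver$ triggers $(\Reveal,l)$, $\Fpop$ returns $\langle\model^l,\globalpublicwitness^l\rangle$ to $\posserver$, and $\Sim$ then hands $\Adv$ the tuple $\langle\model^l,\allpk,\prfvalue\rangle$ with $\prfvalue=F_{\prfkey}(\allpk)$, i.e. the model produced by $\Fpop$ together with the verification key and PRF value $\Sim$ generated itself. (This is the point that gives $\Sim$ the trapdoors — $\prfkey$ and $\{\ski\}$ — needed below; it is legitimate because a freshly sampled $\allpk$, $\prfkey$, and hence $\prfvalue$, have exactly the real distribution.) For a \Prove request on round $l$, $\Sim$ submits $(\Prove,\model',\globalpublicwitness)$ to $\Fpop$ on behalf of $\posserver$ and learns the bit $b$. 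If $b=0$ it sends $\posserver$ a hash $h\neq\hash(\model')$ and the interaction halts with $\posserver$ outputting $0$. If $b=1$ it sends $\hash(\model')$, receives $\alpha$, and (for any $\alpha\in\G$) replies $\langle\sig,\beta\rangle$, where $\sig\leftarrow\TSSignAgg(\{\sig_i\}_{i\in[t]},\allpk)$ is aggregated from partial signatures $\sig_i\leftarrow\TSSign(\model',\ski)$ and $\beta=\alpha^{\prfkey}$, so that both $\TSVerify(\sig,\model',\allpk)=\top$ and $\hashtwo(\allpk,\beta^{1/\rho})=\prfvalue$ hold and $\posserver$ outputs $1$, matching $b$.

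\emph{Hybrids.} $G_0$ is the real execution of $\Ppop$. In $G_1$ I replace the threshold keys and signature reaching $\posserver$ — directly, inside $\globalpublicwitness$, and inside the \Prove transcript — with the ones $\Sim$ draws from the same $\TSKeyGen$ distribution on the same (unchanged) signed model; the step is statistically sound, and EUF-CMA together with signer-indistinguishability ensure $\posserver$ cannot tell which clients produced $\sig$, delivering \textbf{SR2} (privacy of the other clients). In $G_2$ I replace the real group witness and its image with $\Sim$'s fresh $\prfkey$ and $\prfvalue=F_{\prfkey}(\allpk)$ and answer the \Prove-phase OPRF with this key; indistinguishability follows from OPRF sender privacy (so $\posserver$ never obtains $\prfkey$) and pseudorandomness of $F$. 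Since the prover's messages $\hash(\model^l)$, $\sig$, $\beta$ are then identical for every client who participated in round $l$ (and vary only with $\posserver$'s own fresh $\rho$), $\posserver$ can neither recognize nor link the prover, yielding \textbf{SR3} (anonymity) and \textbf{SR4} (unlinkability). Finally, collision resistance of $\hash$ guarantees that $\posserver$'s test $\hash(\model^l)=\hash(\model')$ coincides with $\Fpop$'s condition $\model^l=\model'$ except with probability $\negl$, so the bit $\Sim$ routes is $\posserver$'s correct output. Chaining $G_0\approx_c G_1\approx_c G_2$ yields the two worlds' computational indistinguishability, which also covers \textbf{SR1} since any $\Adv$ making $\posserver$ accept a false claim would have to forge $\sig$, break the OPRF, or find a hash collision.

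\emph{Main obstacle.} As in Theorem~\ref{thm:soundnessproof}, the delicate step is the \Prove phase: $\Sim$ must force $\posserver$ to accept exactly when $\Fpop$ certifies the claim, yet it holds neither the honest clients' real data nor the $\proofstate^l$ and $\globalwitness$ that $\Fpop$ hands only to clients. The resolution — letting $\Sim$ pick $\posserver$'s $\allpk$ and $\prfkey$ itself, which is sound only because these are identically distributed to the real ones — forces one to verify that this self-generated $\globalpublicwitness$ stays consistent with everything else in $\posserver$'s view (crucially, $\posserver$ is never given any binding between $\allpk$ and a particular client), and that the reply $\beta=\alpha^{\prfkey}$ is well defined for an arbitrary blinding $\alpha\in\G$ supplied by $\posserver$. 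Carrying this consistency through the hybrid chain is the bulk of the work.
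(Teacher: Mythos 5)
Your proposal is correct and follows the same ideal/real simulation paradigm as the paper, reducing to the same three assumptions, but the details of the simulator and the hybrid decomposition differ in ways worth noting. The paper's simulator does \emph{not} pass the model from \Fpop through to \Adv: it fabricates $\model'$ by training on simulated data $\{\idata'\}$ drawn from the same distribution, which is why the paper needs an extra hybrid (its $G_1$) arguing indistinguishability of the secure-aggregation step, and a further hybrid ($G_4$) replacing $\model'$, $\hash(\model')$, and $\sig'$ by values from public distributions; your simulator hands \Adv the model that \Fpop legitimately delivers to \posserver via \Reveal, which is arguably the cleaner choice for an output the corrupted party is entitled to receive, and it lets you drop those hybrids. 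Your \Prove-phase simulator also inverts the order of operations — querying \Fpop for the bit $b$ first and then forcing the transcript to match, versus the paper's simulate-then-report — which is legitimate for a semi-honest \posserver whose input is fixed, though you should specify that the ``wrong hash'' sent when $b=0$ is distributed as $\hash(\model^l)$ for the client's actual round-$l$ model rather than an arbitrary non-matching value. The main thing you compress is the treatment of anonymity, unlinkability, and privacy: the paper states these as three separate lemmas, each with an explicit reduction $\B$ that embeds a signer-indistinguishability challenge by submitting a pair of authorized subsets $(\clientset_0,\clientset_1)$ differing only in the identity under test, and it is these reductions that carry the burden of \textbf{SR2}--\textbf{SR4}; you correctly identify signer-indistinguishability and OPRF security as the operative assumptions but assert the conclusions inside the hybrid descriptions rather than constructing the reductions, so a full write-up would need to supply them.
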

\begin{proof} 
The simulator \Sim simulates the honest parties which are the \flserver and  \clientset in the real world and malicious parties in the ideal world which is \posserver. Note that \Sim stores all the data it receives, generates, and sends in its database $DB$. \Sim behaves as follows: \\
\noindent \textbf{Setup Phase:}
\begin{itemize}
    \item \Sim receives $(\model,\globalpublicwitness=\{\allpk,\prfvalue\})$ from \Fpop.
    \item \Sim generates data for each honest client it is simulating as $\{\idata'\}_{i\in[n]}$ from the same data distribution of $\idata$, and the public parameter $\mlparam'$ from FL public parameter distribution and generates $\langle \{\ski',\pki'\}_{i\in [n]},\allpk'\rangle$ by choosing the keys from the key distributions of the algorithms of threshold signature scheme, and secure aggregation in the real world. 
\end{itemize}
\textbf{Generate Phase:}
\begin{itemize} 
    \item Upon receiving  $\langle \model, \allpk, \prfvalue \rangle$ from \Fpop, \Sim computes $\Train(\mlparam',\idata')\rightarrow \mlmsg'$ for each clients and aggregates all the $\mlmsg'$ to generate $\model'$. 
    \item \Sim generates a signature $\TSSign(\model',\ski')\rightarrow \sig_i'$ and then $\TSSignAgg(\{\sig_i'\}_{i\in[n]},\allpk')\rightarrow \sig'$. 
    \item \Sim generates a PRF key $\prfkey'$ via OPRF key generation algorithm and  computes $\prfvalue'=F_{\prfkey'}(\allpk')$.
    \item \Sim sends $\langle \model',\allpk',\prfvalue' \rangle$ to \Adv. 
    \item \Sim stores all the values in $DB$.
\end{itemize}
\textbf{Prove Phase:}
\begin{itemize}
     \item \Sim sends $\hash(\model')$ to \Adv. 
     \item If \Adv returns abort, \Sim sends $\langle\model',\allpk\rangle$ to \Fpop. Otherwise, \Adv sends $\tilde{\alpha}$ to \Sim. 
     \item \Sim checks if $\tilde{\alpha} \in \G$. If it is, \Sim computes $\beta = \tilde{\alpha}^{\prfkey'}$. 
     \item \Sim sends $\langle \sig,\beta\rangle$ to \Adv. 
     \item If \Adv returns $0$, \Sim sends $\langle\model',\allpk'\rangle$ to \Fpop. Otherwise, \Sim sends $\langle\model,\allpk\rangle$ to \Fpop.
\end{itemize}
\end{proof}

\begin{claim}
    The view of \Adv, corrupting \posserver, in their interaction with \Sim is indistinguishable from the view in their interaction with the real-world honest parties. 
\end{claim}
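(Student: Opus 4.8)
The plan is to argue this by a standard sequence of hybrid games $G_0,\ldots,G_3$, exactly parallel to the corrupted-client case but now adapted to the view of a corrupted \posserver. Recall that when \posserver is corrupted, \Sim plays \flserver and every honest client, and \posserver is honest-but-curious, so it never deviates; hence the entire burden is (i) to reproduce every message \posserver would see — namely the tuple $\langle \model,\allpk,\prfvalue\rangle$ delivered at the end of \Generate (equivalently, via \Reveal in \Fpop), and during \Prove the hash $\hash(\model')$ from the proving client, the pair $\langle \sig,\beta\rangle$, and the final output bit — without access to the honest parties' true inputs, and (ii) to route inputs to \Fpop so that the ideal-world output matches the real-world one. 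I would start from $G_0$, the real execution of \FedPoP on the honest parties' real inputs, and replace one honest component at a time.

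In $G_1$, \Sim replaces the model parameters \mlparam and every client's data \idata (together with the secure-aggregation randomness) by fresh samples from the same distributions, obtaining a model $\model'$ in place of $\model$. This is indistinguishable because \posserver never observes any secure-aggregation traffic, so its only exposure to the training is through the aggregate itself, which for freshly sampled training outputs is identically (resp.\ computationally indistinguishably) distributed, while collision resistance of $\hash$ prevents \posserver from using the hash check to detect the substitution. In $G_2$, \Sim replaces the threshold keys $\{\pki,\ski\}$, \allpk and the aggregate signature \sig with fresh keys $\{\pki',\ski'\}$, $\allpk'$ and a fresh aggregate $\sig'$ on $\model'$; indistinguishability follows since \TSKeyGen draws keys from a fixed distribution, a signature on a fixed message under a fresh key is distributed like a real one, and EUF-CMA together with the high min-entropy of the keys leaves \posserver only a negligible chance of distinguishing the two (or of having itself produced a valid $\sig$ for a round it never received). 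In $G_3$, \Sim replaces the PRF key \prfkey by an independent $\prfkey'$, sets $\prfvalue' = F_{\prfkey'}(\allpk')$, and answers the \Prove OPRF query with $\prfkey'$ (so $\beta = \alpha^{\prfkey'}$ and $\hashtwo(\allpk',\beta^{1/\rho})=\prfvalue'$ still holds); this is indistinguishable by OPRF security, since the transcript $\langle\alpha,\beta\rangle$ plus $\prfvalue'$ reveals to the receiver \posserver nothing beyond the single image $F_{\prfkey'}(\allpk')$, which is pseudorandom. For every adjacent pair $(G_{i-1},G_i)$, a distinguisher yields, via a direct reduction, a breaker of secure-aggregation confidentiality, of threshold-signature unforgeability/indistinguishability, or of OPRF security, respectively; and $G_3$ is exactly the simulation described before the claim.

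For output consistency in \Prove, \Sim extracts from \posserver's messages whether it truly holds the same model and a valid proof: it compares the received hash against $\hash(\model')$ (collision resistance makes this decisive), then checks $\TSVerify(\sig,\model',\allpk')$ and the OPRF equality $\hashtwo(\allpk',\beta^{1/\rho})=\prfvalue'$, and forwards to \Fpop the matching input $\langle \model,\allpk\rangle$ when all checks pass and a mismatching pair $\langle\model',\allpk'\rangle$ otherwise, so \Fpop returns precisely the bit the real protocol would. Hence $G_0 \approx_c G_1 \approx_c G_2 \approx_c G_3$, which proves the claim; anonymity and unlinkability fall out for free, since \Sim's client role is generic (no identity appears in the view) and the substituted per-round key $\prfkey'$ is independent across rounds.

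The hard part, I expect, is the $G_0$--$G_1$ step: justifying that the aggregated model built from freshly sampled data is indistinguishable from the real one. Unlike the other steps, this is not a clean reduction to a cryptographic assumption; it rests on \posserver having no auxiliary information correlating the deployed model with the genuine datasets, so it must be handled through the independence of the auxiliary input $aux$ and the confidentiality guarantee of secure aggregation (which hides every individual contribution $\mlmsg$). A secondary subtlety is threading the per-round freshness of \prfkey through the OPRF reduction so that the same argument simultaneously yields unlinkability across multiple \Prove executions.
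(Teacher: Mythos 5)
Your hybrid structure largely parallels the paper's: fresh SA inputs yielding $\model'$, resampled threshold keys, a resampled PRF key with $\prfvalue'=F_{\prfkey'}(\allpk')$ and $\beta=\tilde{\alpha}^{\prfkey'}$, and the same routing of $\langle\model,\allpk\rangle$ versus $\langle\model',\allpk'\rangle$ to \Fpop for output consistency. The paper, however, uses four hops and isolates the signature replacement as a separate final game $G_4$, justified by \emph{signer-indistinguishability} (the subset-hiding property of the threshold signature: the distribution of the aggregated $\sig$ depends only on the signed message, not on which authorized subset of size $t$ contributed partial signatures). You instead fold the signature swap into your $G_2$ and justify it by EUF-CMA plus key entropy. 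That is the genuine gap: EUF-CMA is an unforgeability guarantee and says nothing about whether $\sig$ leaks which clients signed. For a corrupted \posserver the properties at stake are precisely anonymity, unlinkability, and privacy, and the paper proves all three by dedicated reductions that embed a signer-indistinguishability challenge $(m^\star,\clientset_0,\clientset_1)$ into the simulated view. Your closing remark that these properties ``fall out for free'' because no identity appears in the view assumes exactly the conclusion that needs proving — namely that $\sig$ carries no subset-dependent information. Without invoking signer-indistinguishability (stated in Section~\ref{threshold_signature} as a separate assumption from EUF-CMA), the chain $G_0\approx_c\cdots\approx_c G_3$ does not establish the claim.

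Two smaller points. First, your worry about the $G_0\to G_1$ step is reasonable but the paper makes the same move and treats it as following from SA confidentiality plus collision resistance of $\hash$; you are not missing anything the paper supplies there. Second, your observation that per-round freshness of $\prfkey$ must be threaded through the OPRF reduction to get unlinkability is correct and matches the paper's multi-instance OPRF argument in the unlinkability lemma — but again, in the paper the residual linkage after the OPRF hop is eliminated by the signer-indistinguishability hop, not by the OPRF alone, so repairing the signature step is necessary for that part of your argument as well.
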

\begin{proof}    
Similar to the previous proof, we also prove this claim via a sequence of hybrid games. 
Note that if \Adv distinguishes its interaction with \Sim from the real protocol, then \Adv must violate any of the security requirements: 1) privacy; 2) anonymity; or 3) unlinkability. 
The details of the games with their reductions are presented below:
\begin{itemize}
 \item \textbf{Game $G_0$:} In this game, we use the inputs of honest parties as the inputs of our simulation. This game is identical to real-world \FedPoP protocol. 
\item \textbf{Game $G_1$:} This game is the same as $G_0$ except when running the SA protocol, \Sim used a different \mlparam, \idata, $\sarandb$, and $\sapairwisemask$ which are respectfully derived from their corresponding distribution, and it resulted in $\model'$. 
\Adv can distinguish this behavior from $G_0$ only if it can break the secure aggregation confidentiality or the collusion resistant hash function. However, secure aggregation security hides the true values using random key values and prevents \Adv from learning $\idata'$ which is used instead of true data while hash function ensures that \Adv cannot cheat while sending $\tilde{h}$ since the only way for \Adv to succeed is to break the security of hash function. Hence, $G_1$ is computationally indistinguishable from $G_0$. 
\item \textbf{Game $G_2$:} $G_2$ is the same as the previous game except that \Sim behaves differently while resampling $\{(\sk_i',\pk_i')\}_{[n]}$ and $\allpk'$ from their public distributions and redirect subsequent messages accordingly; training/aggregation remain honest. Since key-generation is random and messages depend only on published keys, $G_2$ is computationally indistinguishable from $G_2$. 
\item \textbf{Game $G_3$}:
$G_3$ is the same as the previous game except that \Sim simulates $\prfkey$ by choosing $\prfkey'$ from the same OPRF key distribution and sending $\prfvalue'$ calculated on $\allpk'$ and $\prfkey'$ to \Adv. 
\Sim replaces $\prfkey$ by $\prfkey'$, set $\prfvalue'=F_{\prfkey'}(\allpk')$, and answers challenges by $\beta=\tilde{\alpha}^{\prfkey'}$. 
If $\Adv$ can distinguish $G_3$ from $G_2$, then we build a PPT distinguisher that breaks OPRF security. 
Therefore, $G_3$ is computationally indistinguishable from $G_2$ under OPRF security. 
\item \textbf{Game $G_4$:}
$G_4$ is the same as $G_3$ except that 
\Sim produces $\model'$, $\hash(\model')$, and $\sig'$ from public distributions and send $\langle \model',\allpk',\prfvalue'\rangle$ and later $\langle \sig',\beta\rangle$. \sig is the unique group signature on $\hash(\model)$ under $\allpk$ (this is the standard subset-hiding property of threshold signature schemes). 
As inputs and algorithms match the distributions and honest evaluation, $G_4$ is computationally indistinguishable from $G_3$. 
\end{itemize}

\emph{Anonymity} is single-round and asks the adversary to identify the prover among $C_0,C_1$; a constructed reduction $\B$ embeds the signer-indistinguishability challenge in that round by choosing $\clientset_0,\clientset_1$ identical except for including $C_0$ vs.\ $C_1$. 
\emph{Unlinkability} is two-round and asks whether two transcripts come from the same honest client or two distinct honest clients; $\B$ uses a random-round embedding (one of the two rounds carries the challenge signature, the other is simulated identically in both worlds). 
\emph{Privacy} fixes one honest prover $\clienti$ and asks that the server’s view not reveal which other clients participated; the game compares two authorized subsets $\clientset_0,\clientset_1$ of equal size $\ge t$ that both contain $\clienti$, and $\B$ challenges on $(m,\clientset_0,\clientset_1)$, allowing only the leakage that ``at least $t$ participated.'' 
Hence, in our model \emph{unlinkability implies anonymity} (an identifier would yield a linker), while \emph{privacy is orthogonal} to both: anonymity need not hide co-signer membership and privacy need not hide which single client proved. All three ultimately reduce to the $G_3 \to G_4$ via signer-indistinguishability.

\begin{lemma}[Anonymity]
For any two honest clients $C_0$ and $C_1$ and any round $l$, the view of \Adv when the prover is $C_0$ is computationally indistinguishable from the view when the proving client is $C_1$. 
\end{lemma}
\begin{proof} 
For the proof of anonymity, we consider the standard indistinguishability experiment that selects one of clients $C_0$ and $C_1$ uniformly at random as the prover. The hybrids $G_0 \to G_2$ are independent of this choice. 
In $G_3$, the only identity-bearing artifacts are the OPRF tuples $(\prfvalue,\beta)$, but these are either real evaluations under a fresh/hidden key or pseudorandom and, in either case, independent of the client's identity; otherwise, \Adv (as a distinguisher) would break OPRF security. 
In $G_4$, the exposed proof of participation is $\sig'$, which (by signer-indistinguishability) depends only on $m=\hash(\model)$. The signature distribution does not depend on which authorized subset produced partial signatures. 
Now, let us suppose, toward contradiction, that an adversary $\Adv$ identifies the prover with non-negligible advantage $\epsilon(\lambda)$ after the $G_0\to G_3$ hops. We build a reduction $\B$ that breaks the signer-indistinguishability of the threshold signature scheme. The challenger provides $\allpk$ and a signing oracle that, on input $(m,\clientset)$ for any authorized subset $\clientset$ of size $t$, returns a full signature on $m$. $\B$ simulates the entire round view for $\Adv$ using uniform OPRF outputs and its signing oracle for any auxiliary signatures. For the challenge embedding, $\B$ sets $m^\star = \hash(\model_l),\allpk$ and chooses authorized subsets $\clientset_0,\clientset_1$ that are identical except that $\clientset_0$ includes $C_0$ while $\clientset_1$ includes $C_1$ (both of size $t$). $\B$ submits $(m^\star,\clientset_0,\clientset_1)$ and receives $\sig^\star$ for a hidden bit $b$, then uses $\sig^\star$ as the round-$l$ signature in the simulated view. If $\Adv$ succeeds, it distinguishes whether $\sig^\star$ was generated by $\clientset_0$ or by $\clientset_1$, which gives $\B$ advantage $\epsilon(\lambda)-\negl$ for the threshold signature scheme's security game. This contradicts the assumption that the threshold signature scheme is signer-indistinguishable against polynomial-time adversaries. Therefore, \FedPoP satisfies the anonymity requirement. \footnote{Recall that a client can communicate with \posserver via an anonymous communication in order to hide its identifier (IP address) and this also applies to unlinkability.}
\end{proof}

\begin{lemma}[Unlinkability] Across two rounds, the joint view of $\Adv$ is indistinguishable between the case where the same honest client proves twice and the case where two distinct honest clients prove. \end{lemma}
\begin{proof}
For the proof of unlinkability, we run the hybrids $G_0 \to G_4$ independently for each round $l$. 
$G_0 \to G_3$. Each round has fresh OPRF keys. 
By multi-instance OPRF security, $(\prfvalue_l,\beta_l)$ are indistinguishable from values chosen from the uniform distribution over their ranges, independently across rounds and independently of the proving client’s identity. 
Therefore, only the threshold signature could carry linkage.
$G_3 \to G_4$. In round $l$, the full signature $\sig_l$ is generated on $m_l=\hash(\model_l)$. 
By signer-indistinguishability, the distribution of $\sig_l$ depends only on $m_l$, not on which authorized subset of clients produced the partial signatures (which reveals nothing about which subset of clients $\{\clienti\}$ participated in that round). 

Now, let us assume, toward contradiction, that an adversary $\Adv$ wins the unlinkability game with non-negligible advantage $\epsilon(\lambda)$ after $G_0\to G_3$; i.e., given two challenge transcripts for rounds $0$ and $1$, it decides whether they came from the same honest client or from two distinct honest clients better than random. We build a reduction $\B$ that breaks the signer-indistinguishability of the threshold signature scheme. 
The challenger provides $\allpk$ and a signing oracle for any authorized subset $\clientset$ of size $t$. $\B$ simulates both rounds for $\Adv$ using uniform OPRF outputs and its signing oracle for all non-challenge signatures. To embed the challenge, $\B$ chooses a random round $l^\star\in\{0,1\}$, sets $m^\star \gets \langle l^\star,\, \hash(\model_{l^\star}), \allpk \rangle$, and chooses authorized subsets $\clientset_0,\clientset_1$ that are identical except for swapping the honest client whose linkage $\Adv$ attempts to detect. 
$\B$ submits $(m^\star,\clientset_0,\clientset_1)$ and receives $\sig^\star$ for a hidden bit $b$, then uses $\sig^\star$ as the signature for round $l^\star$; the other round is signed via the oracle in a way that is identical under both cases (``same client twice'' vs. ``two distinct clients''). 
If $\Adv$ succeeds, it distinguishes whether $\sig^\star$ was generated by $\clientset_0$ or by $\clientset_1$, which gives $\B$ advantage at least $\epsilon(\lambda)-\negl$ in the signer-indistinguishability game of the threshold signature scheme. This contradicts the assumption that the threshold signature scheme is signer-indistinguishable. Therefore, \FedPoP satisfies unlinkability. 

\end{proof}

\begin{lemma}[Privacy]
Let $\clientset_0,\clientset_1\subseteq \clientset$ be two subsets with $|\clientset_0|=|\clientset_1|\ge t$, both containing the (honest) prover. 
Then the views of $\posserver$ in runs where the participating set is $\clientset_0$ vs.\ $\clientset_1$ are computationally indistinguishable, beyond the permitted leakage of the fact ``at least $t$ clients participated.'' \end{lemma}
\begin{proof}
For the proof of privacy, we first define hybrids $G_0\to G_4$ for a single run. 
Up to $G_2$, the view depends only on the fact that the threshold condition is met (i.e., $|\clientset_b|\ge t$), not on which particular clients are in the set, because the messages observed by $\posserver$ are either public or masked/aggregated. 
In $G_3$, the OPRF tuples $(\prfvalue,\beta)$ depend on the public input and hidden key, not on the actual subset; otherwise we could break OPRF security. 
In $G_4$, $\sig'$ leaks no membership information. 
Finally, note that $\allpk$ (the aggregation of clients' public keys) carries no client-identifying linkage information (no mapping to $\pki$); it is invariant under permutations of client identities and thus reveals no membership. 
Therefore $G_0$ and $G_4$ are computationally indistinguishable, and so the two distributions of $\Adv$'s view for $\clientset_0$ and $\clientset_1$ are indistinguishable. 
For a concise reduction, assume an adversary $\Adv$ distinguishes the $\posserver$-views for $\clientset_0$ versus $\clientset_1$ with non-negligible advantage $\epsilon(\lambda)$ after $G_0 \to G_3$. We build $\B$ against signer-indistinguishability: $\B$ simulates the run for $\Adv$ using uniform OPRF outputs and its signing oracle for any auxiliary signatures; it sets $m^\star \gets \hash(\model)$ and submits $(m^\star,\clientset_0,\clientset_1)$ to the challenger, receiving $\sig^\star$. It embeds $\sig^\star$ as the run’s threshold signature in the simulated view. If $\Adv$ succeeds, it distinguishes whether $\sig^\star$ was generated by $\clientset_0$ or by $\clientset_1$, which gives $\B$ advantage $\epsilon(\lambda)-\negl$ in the signer-indistinguishability game of the threshold signature scheme. This contradicts the assumption that the threshold signature scheme is signer-indistinguishable against polynomial-time adversaries. 
Therefore, up to the permitted leakage that at least $t$ clients participated, the views for $\clientset_0$ and $\clientset_1$ are computationally indistinguishable, as claimed. Hence, \FedPoP satisfies privacy. 
Hence, \FedPoP satisfies privacy.
\end{proof}

\end{proof}

\subsection{Performance Analysis}\label{performance}
\subsubsection{Setup}\label{setup}
We implement a proof of concept of \FedPoP in \texttt{Python} on a machine with Intel Core i5 at 2.3 GHz and 16GB RAM. 
We use the FL framework $\mathsf{Flower}$ \cite{beutel2020flower} to implement our FL design.  
For the secure aggregation, we apply the $\mathsf{SecAgg+}$ secure aggregation protocol \cite{bell2020secure} which is based on masking. 
We adopt $\mathsf{liboprf}$ \cite{liboprf} for the OPRF implementation and $\mathsf{ROAST}$ \cite{roastPaper} for the threshold signature. 
The numbers reported for the experimental results are an average of 10 measurements.

\noindent\textbf{DNN Structure and Data Split}. 
We evaluated our approach using the \texttt{MNIST} dataset, consisting of 70K grayscale images with 10 class labels (corresponding to 10 digits), in which 60K images are for training and 10K images are for testing.  
The training set was partitioned among clients (with client counts of 10, 25, 50, and 100) using a non-independent and identically distributed (non-i.i.d.) label skew strategy. 
The testing set was used uniformly by all clients for model evaluation. 
The model architecture is a lightweight CNN ($\mathsf{SimpleCNN}$) comprising two convolutional layers (16 and 32 filters respectively), each followed by ReLU activation and 2×2 max pooling, and two fully connected layers leading to a 10-class output. 
Each client trained locally for 10 epochs per communication round using the Adam optimizer (learning rate 0.0005) and Cross-Entropy Loss. 
We simulated client dropouts with rates of $10\%, 30\%, 50\%$ and $70\%$. 
A randomly selected subset of active clients based on dropout rates trained locally, masked their updates, and sent them to the server for secure aggregation followed by signing the received \model and \prfkey generation. 
For masking, the masking values are randomly selected from a normal distribution.
For threshold signature, the threshold $t$ represents the dropout tolerance. For instance, for $n=100$ and $10\%$ dropout rate, $t$ is $90$ ($t=100\times0.9$), which means that at least $90\%$ of the clients shall remain online. 
Performance metrics included client computation time, communication overhead, and server aggregation time.

\subsubsection{Evaluation}
We evaluate \FedPoP in terms of the communication and computation overhead introduced during the \Setup, \Generate and \Prove phases. 

\noindent \textbf{Setup.} For \Setup, the only overhead is from the threshold signature initialization to the securely aggregated FL, which can take up to $\sim 33.6$ seconds for 100 clients when $t=50$. 
Table \ref{tab:ts_setup} presents the setup time required for threshold signature across varying number of clients and threshold percentages. 
The setup time increases significantly with the number of signers due to the computational and communication overhead introduced by larger group sizes. For instance, at a 50\% threshold, setup time grows from approximately 0.0597 seconds with 10 signers (where at least 5 clients must be online to sign) to over 33 seconds with 100 signers (where at least 50 clients must be online to sign). For $n\le25$, the cost remains sub-second to under a second. In practice, this one-time cost can be amortized over many rounds. 

\begin{table}[htp]
\centering
\caption{Threshold signature setup time in seconds across different number of clients ($n$) and threshold percentages ($t$).}
\label{tab:ts_setup}
\begin{tabular}{r|cccc}
\toprule
$n$  & $t=30\%$ & $t=50\%$ & $t=70\%$ & $t=90\%$ \\
\midrule
10   & 0.0596 & 0.0597 & 0.0730 & 0.0940 \\
25   & 0.4302 & 0.6062 & 0.8008 & 0.9377 \\
50   & 2.9608 & 4.1044 & 5.5554 & 6.8690 \\
100  & 21.6394 & 33.5832 & 38.6274 & 49.2601 \\
\bottomrule
\end{tabular}
\end{table}

\begin{figure}
\hspace*{-1cm}
\begin{tikzpicture}
\begin{axis}[
    xlabel={Threshold $t$ (\%)},
    ylabel={Time (s)},
    legend style={at={(0.02,0.98)}, anchor=north west, draw=none, fill=white, font=\footnotesize},
    grid=both,
    width=9.5cm,
    height=5.5cm,
    mark options={solid},
    nodes near coords,
    point meta=explicit symbolic,
    every node near coord/.append style={font=\scriptsize, anchor=south},
    line width=1.2pt
]

\addplot+[mark=*, color=plotblue] coordinates {
    (30,0.0596) 
    (50,0.0597) 
    (70,0.0730) 
    (90,0.0940) 
};
\addlegendentry{$n=10$}

\addplot+[mark=square*, color=plotred] coordinates {
    (30,0.4302) 
    (50,0.6062) 
    (70,0.8008) 
    (90,0.9377) 
};
\addlegendentry{$n=25$}

\addplot+[mark=triangle*, color=plotgreen] coordinates {
    (30,2.9608) 
    (50,4.1044) 
    (70,5.5554) 
    (90,6.8690) 
};
\addlegendentry{$n=50$}

\addplot+[mark=diamond*, color=plotpurple] coordinates {
    (30,21.6394) 
    (50,33.5832) 
    (70,38.6274) 
    (90,49.2601) 
};
\addlegendentry{$n=100$}
\end{axis}
\end{tikzpicture}
\caption{Client threshold signature setup time in seconds.}
\end{figure}
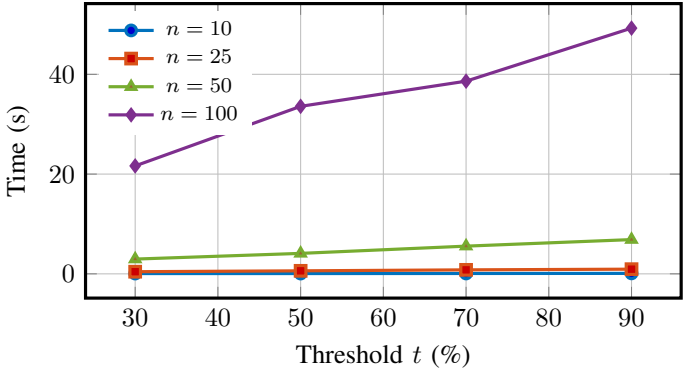

\begin{table}[htb]
\centering
\caption{Average execution time of the components of \Generate in seconds. } 
\begin{tabular}{c c c|c|c}
 \textbf{$n$}  & \textbf{$n_{drop}$}& \textbf{$t$}&\textbf{Client}
& \textbf{FL Server} \\
\hline
\multirow{4}{*}{\rotatebox{45}{$n=$\textbf{10}}}  & $0.1n$   & $ 0.9n$     & \textbf{65.1968 }                        & 0.6623\\
                         & $0.3n$    & $ 0.7n$   &       49.2949                         &  0.2233\\
                         & $0.5n$    & $ 0.5n$    &  32.6239                         & 0.0257\\ 
                         & $0.7n$    & $ 0.3n$   &       \textbf{18.7537    }                      & 0.0217\\ \hline
\multirow{4}{*}{\rotatebox{45}{$n=$\textbf{25}}}   & $0.1n$   & $ 0.9n$     &    \textbf{63.3622   }                      & 0.1475\\
                         & $0.3n$   & $ 0.7n$    &         47.9265                       &  0.0840\\
                         & $0.5n$    & $ 0.5n$    &       35.6831                          &  0.0556\\ 
                         & $0.7n$    & $ 0.3n$   &       \textbf{23.0483}                   & 0.0408\\ \hline
                         
\multirow{4}{*}{\rotatebox{45}{$n=$\textbf{50} }}   & $ 0.1n$  & $ 0.9n$      &   \textbf{67.3044  }                       & 0.3275\\
                         & $0.3n$    & $ 0.7n$   &                52.7199                          & 0.2414\\
                         & $0.5n$   & $ 0.5n$     &      37.9622                          & 0.1253\\ 
                         & $0.7n$  & $ 0.3n$     &                    \textbf{21.0227 }                        & 0.0970\\ \hline
\multirow{4}{*}{\rotatebox{45}{$n=$\textbf{100}}}  & $0.1n$   & $ 0.9n$     &     \textbf{96.9902   }                       &  1.1668\\
                         & $0.3n$   & $ 0.7n$    &      73.3057                          & 0.7930\\
                         & $0.5n$  & $ 0.5n$      &        50.1172                         &  0.5186\\ 
                         & $0.7n$   & $ 0.3n$    &       \textbf{29.4788}                       &   0.2668 \\ \hline
\end{tabular} 
\label{tab:avg-exectime}
\end{table}

\begin{figure} 
\hspace*{-0.3cm}
\begin{tikzpicture}
\begin{axis}[
    xlabel={Dropout $n_{drop}$ (\%)},
    ylabel={Time (s)},
    legend style={at={(1,0.98)}, anchor=north east, draw=none, fill=white, font=\footnotesize},
    grid=both,
    width=9cm, 
    height=5.5cm,
    mark options={solid},
    nodes near coords,
    point meta=explicit symbolic,
    every node near coord/.append style={font=\scriptsize, anchor=south},
    line width=1.2pt
]

\addplot+[mark=*, color=plotblue] coordinates {
    (10,0.6560)
    (30,0.2190) 
    (50,0.0228) 
    (70,0.0192) 
};
\addlegendentry{$n=10$}

\addplot+[mark=square*, color=plotred] coordinates {
    (10,0.1260) 
    (30,0.0695) 
    (50,0.0460) 
    (70,0.0357) 
};
\addlegendentry{$n=25$}

\addplot+[mark=triangle*, color=plotgreen] coordinates {
    (10,0.2646) 
    (30,0.1957) 
    (50,0.0986) 
    (70,0.0798) 
};
\addlegendentry{$n=50$}

\addplot+[mark=diamond*, color=plotpurple] coordinates {
    (10,0.9700) 
    (30,0.6471) 
    (50,0.4321) 
    (70,0.2100) 
};
\addlegendentry{$n=100$}

\end{axis}
\end{tikzpicture}
 \caption{FL Server-side computation time for secure aggregation in seconds.}
    \label{fig:flserverSA}
    \end{figure}
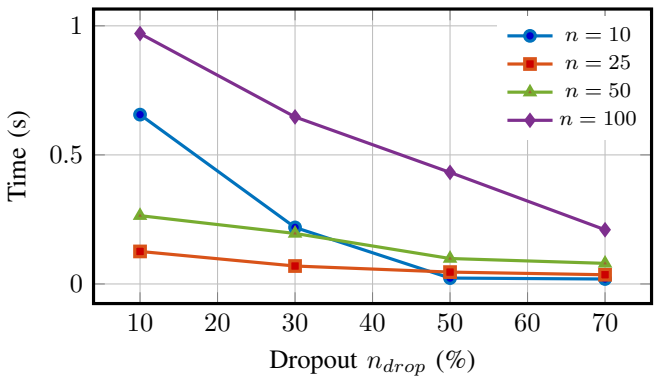

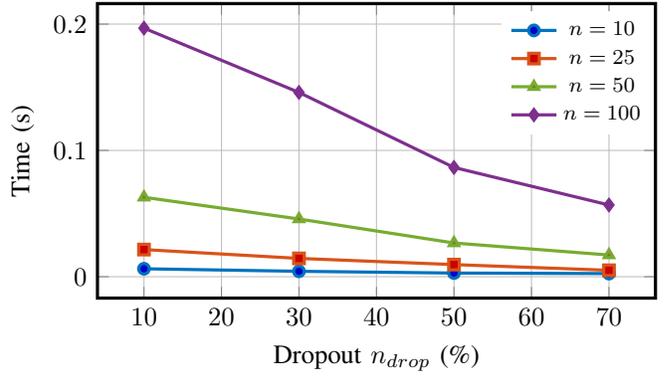
\begin{figure} 
\hspace*{-0.3cm}
\begin{tikzpicture}
\begin{axis}[
    xlabel={Dropout $n_{drop}$ (\%)},
    ylabel={Time (s)},
    legend style={
        at={(1,0.98)},
        anchor=north east,
        draw=none,
        fill=white,
        font=\footnotesize,
        cells={align=left}
    },
    grid=both,
    width=9cm, 
    height=5.5cm,
    mark options={solid},
    nodes near coords,
    point meta=explicit symbolic,
    every node near coord/.append style={font=\scriptsize, anchor=south},
    line width=1.2pt
]

\addplot+[mark=*, color=plotblue] coordinates {
    (10,0.0063) 
    (30,0.0043)
    (50,0.0029) 
    (70,0.0025) 
};
\addlegendentry{$n=10$}

\addplot+[mark=square*, color=plotred] coordinates {
    (10,0.0215) 
    (30,0.0145) 
    (50,0.0096) 
    (70,0.0051) 
};
\addlegendentry{$n=25$}

\addplot+[mark=triangle*, color=plotgreen] coordinates {
    (10,0.0629) 
    (30,0.0457) 
    (50,0.0267) 
    (70,0.0172) 
};
\addlegendentry{$n=50$}

\addplot+[mark=diamond*, color=plotpurple] coordinates {
    (10,0.1968) 
    (30,0.1459)
    (50,0.0865) 
    (70,0.0568) 
};
\addlegendentry{$n=100$}
\end{axis}
\end{tikzpicture}
    \caption{FL Server-side computation time for threshold signature aggregation in seconds.}
    \label{fig:flserverTSAgg}
\end{figure}

\noindent\textbf{Generate.}\label{computational_overhead}
We measure the execution time of each party (clients and server) in the \Generate phase. 
The total execution time is reported in Table~\ref{tab:avg-exectime} for various values of $n$ and $t$. Complete per-component numerical results are provided in Appendix~\ref{app:full_num_generate} by Table~\ref{tab:exetime}. 
The execution time increases with the ratio $t/n$, as higher thresholds require more clients to participate. For example, with $n=100$, $t=0.9n$, and $n_{\mathit{drop}}=0.1n$, \FedPoP adds only $0.97$ seconds of overhead to the underlying secure aggregation protocol. 
In terms of FL server-side computation, we evaluated the secure aggregation and threshold signature aggregation performed by the FL server. Fig. \ref{fig:flserverTSAgg} demonstrates that when the number of clients increases, so the number of partial signatures to aggregate, the signature  aggregation times increases. Fig. \ref{fig:flserverSA} shows the effect of drop-outs on secure aggregation for various number of clients. When dropout percentage increases, meaning that the number of online clients decreases, the computation time of secure aggregation decreases.

In terms of client-side computation, we evaluated the securely aggregated training and threshold signature generation performed by clients. 
Fig. \ref{fig:clientSAtraining} and Fig. \ref{fig:clientTSGen} plot wall-clock time as the dropout rate varies over 10, 30, 50, and 70 percent for client group sizes 10, 25, 50, and 100. 
Fig. \ref{fig:clientSAtraining} shows that when the number of dropped clients increases, the time for securely aggregated training decreases for every group size. The decrease is large: for 10 clients the time goes from 65.19 seconds at 10 percent dropout to 18.75 seconds at 70 percent; for 25 clients it goes from 63.34 to 23.04 seconds; for 50 clients from 67.24 to 21.00 seconds; and for 100 clients from 96.79 to 29.40 seconds.
Fig. \ref{fig:clientTSGen} shows that signature generation also becomes faster as dropout increases, since fewer signing shares are required. The absolute cost is in milliseconds and remains small compared with secure aggregation. 
For small sets of clients such as 10 and 25 clients the variation is minor and the total time is negligible compared with secure aggregation. For larger size of clients the scaling is close to linear in the number of online clients.

\begin{figure}[htp] 
\hspace*{-0.6cm}
\begin{tikzpicture}
\begin{axis}[
    xlabel={Dropout $n_{drop}$ (\%)},
    ylabel={Time (s)},
    legend style={at={(1,0.98)}, anchor=north east, draw=none, fill=white, font=\footnotesize},
    grid=both,
    width=9.5cm, 
    height=5.5cm,
    mark options={solid},
    nodes near coords,
    point meta=explicit symbolic,
    every node near coord/.append style={font=\scriptsize, anchor=south},
    line width=1.2pt
]

\addplot+[mark=*, color=plotblue] coordinates {
    (10,65.19) 
    (30,49.29)
    (50,32.62)
    (70,18.75) 
};
\addlegendentry{$n=10$}

\addplot+[mark=square*, color=plotred] coordinates {
    (10,63.34) 
    (30,47.91) 
    (50,35.67) 
    (70,23.04) 
};
\addlegendentry{$n=25$}

\addplot+[mark=triangle*, color=plotgreen] coordinates {
    (10,67.24)
    (30,52.67) 
    (50,37.93) 
    (70,21.00) 
};
\addlegendentry{$n=50$}

\addplot+[mark=diamond*, color=plotpurple] coordinates {
    (10,96.79)
    (30,73.16) 
    (50,50.01) 
    (70,29.40) 
};
\addlegendentry{$n=100$}

\end{axis}
\end{tikzpicture}
 \caption{Client-side computation time for securely aggregated training in seconds.}
    \label{fig:clientSAtraining}
    \end{figure}
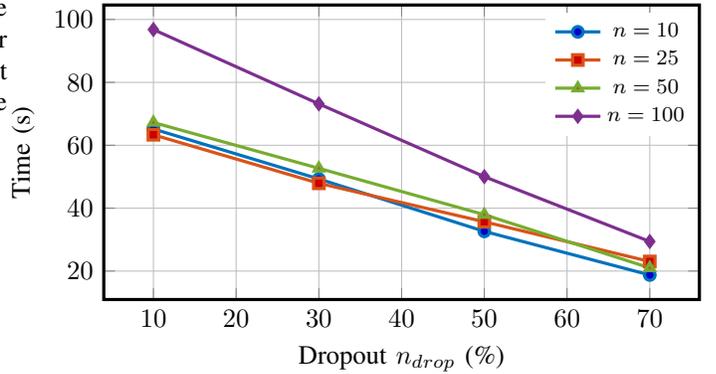

\begin{figure}
\hspace*{-0.5cm}
\begin{tikzpicture}
\begin{axis}[
    xlabel={Dropout $n_{drop}$ (\%)},
    ylabel={Time (s)},
    legend style={
        at={(1,0.98)},
        anchor=north east,
        draw=none,
        fill=white,
        font=\footnotesize,
        cells={align=left}
    },
    grid=both,
    width=9.5cm, 
    height=5.5cm,
    mark options={solid},
    nodes near coords,
    point meta=explicit symbolic,
    every node near coord/.append style={font=\scriptsize, anchor=south},
    line width=1.2pt
]

\addplot+[mark=*, color=plotblue] coordinates {
    (10,0.0068) 
    (30,0.0049) 
    (50,0.0039) 
    (70,0.0037) 
};
\addlegendentry{$n=10$}

\addplot+[mark=square*, color=plotred] coordinates {
    (10,0.0222) 
    (30,0.0165) 
    (50,0.0131) 
    (70,0.0083) 
};
\addlegendentry{$n=25$}

\addplot+[mark=triangle*, color=plotgreen] coordinates {
    (10,0.0644) 
    (30,0.0499) 
    (50,0.0322) 
    (70,0.0227) 
};
\addlegendentry{$n=50$}

\addplot+[mark=diamond*, color=plotpurple] coordinates {
    (10,0.2002) 
    (30,0.1457) 
    (50,0.0972) 
    (70,0.0788) 
};
\addlegendentry{$n=100$}
\end{axis}
\end{tikzpicture}
    \caption{Client-side computation time for threshold signature generation in seconds.}
    \label{fig:clientTSGen}
\end{figure}
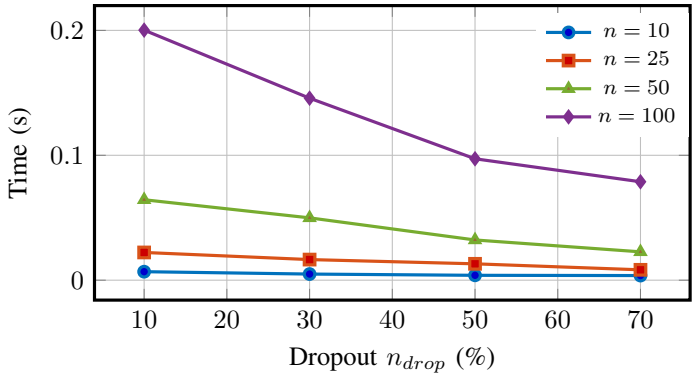

\begin{figure}
\centering
\begin{tikzpicture}
\begin{axis}[
    ybar,
    bar width=20pt,
    enlarge x limits=0.5,
    ylabel={Time (s)},
    symbolic x coords={Client, Service Provider},
    xtick=data,
    nodes near coords,
    width=8cm,
    height=6cm,
    ymin=0,
    ymax=0.07,
    scaled y ticks=false, 
    y tick label style={/pgf/number format/fixed}, 
    nodes near coords,
    point meta=explicit symbolic
]
\addplot coordinates {(Client, 0.0052) [0.0052] (Service Provider, 0.0560) [0.0560]};
\end{axis}
\end{tikzpicture}
\caption{Computation time during the Prove phase of \FedPoP for the client and the service provider, measured in seconds.}
\label{fig:fedpop_prove_execution}
\end{figure}
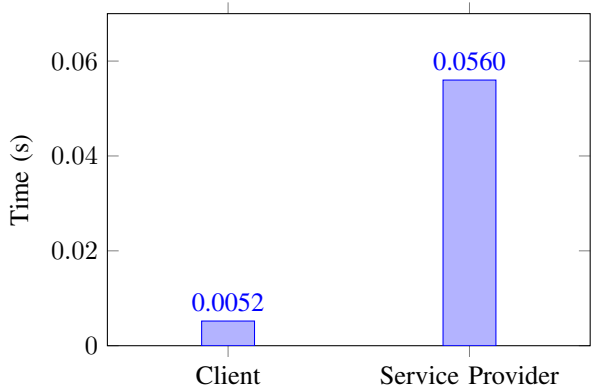

\noindent \textbf{Prove.} The \Prove phase involves only signature verification and OPRF computation while we exclude hash function evaluation and comparison due their negligible computation cost. As shown by Fig. \ref{fig:fedpop_prove_execution}, computing $\beta$ takes approximately $0.0052$ seconds, while the OPRF evaluation and signature verification take $0.0558$ and $0.0002$ seconds, respectively. In total, generating a proof of participation takes on average $0.0612$ seconds, introducing minimal client-side overhead.

\noindent \textbf{Communication Overhead.}\label{communication_overhead}
We evaluate communication overhead by measuring the data exchanged in both the \Generate and \Prove phases. \FedPoP introduces one additional communication round over standard secure aggregation because clients share partial signatures with \flserver and receive the aggregated signature together with \prfkey. This overhead is minimal due to the small size of the signature and \prfkey relative to the model size (approximately 810KB in our experiments). During the \Prove phase, \posserver sends about 95 bytes and receives 315 bytes from the client. Although the results indicate low overhead, we plan to compare \FedPoP's efficiency more thoroughly against related work~\cite{pflm21}.

\section{Further Discussion}\label{discussion}
In this section, we propose possible adjustments to \FedPoP for more sophisticated properties and discuss some open challenges.

\noindent \underline{\textbf{Key Management}}. One further challenge is the management of clients' keys. 
In practice, clients may contribute to many different models with different FL servers. 
Therefore, running the threshold signature setup could be cumbersome for the clients for two reasons: 1) storage of these keys; and 2) security of the keys. 
To avoid this the concept of derivable asymmetric keys can be adopted. 
However, note that PRF key has to be generated freshly for each model generation. 
This approach is proposed as the key derivation scheme used in Bitcoin Improvement Proposal (BIP) 32~\cite{bip32}. 
The key derivation scheme allows a chain of child public keys (i.e., clients' verification keys used during \Generate) to be derived from a single parent public key. 
\prfkey and \ski shall be secured.  
To securely store these keys, a secure storage on a single server or distributed multi-server can be deployed.  

\noindent \underline{\textbf{Malicious \posserver}}. Our approach considers an honest-but-curious \posserver. 
However, \FedPoP can be secured against a malicious \posserver. 
To do so, an honest \clienti has to verify if every computation \posserver performs is computed correctly. 
This can be achieved by more advanced verifiable computations such as zero-knowledge proof or by integrating verifiable federated learning approaches \cite{verifynet}. 
This requires, however, further investigation, and we leave it as a future work.

\noindent \underline{\textbf{Applicability.}} \FedPoP generates proof of participation after the model is generated using secure aggregation. Therefore, our \FedPoP instance can be integrated to any securely-aggregated FL designs. 
However, the overall efficiency is determined by the underlying aggregation protocol. A comprehensive empirical comparison across protocols is deferred to future work. 

\noindent \underline{\textbf{Multiple Rounds and Proof Generation.}} In principle, FL composes of multiple rounds in which many consecutive summations of model weights are performed to derive a (converged) model. A client may or may not participate to every round. Therefore, it is important that in each round a proof is generated along with \model. However, this increases the number of proofs and witnesses
on the client side. Thus, it is interesting to investigate in the future how to accumulate the proofs and witnesses. 
\section{Conclusion and Future Work} 
We proposed \FedPoP, a privacy preserving FL framework with proof of participation. 
\FedPoP ensures that the client's identity is kept secret and that the proof of participation does not reveal the identities of other clients. The service provider cannot link multiple participations of the same client to the model. 
\FedPoP utilizes threshold signatures to enable clients to generate a signature on a global model even in the presence of dropped out clients and employs OPRF to prove the knowledge of the secret witness without expensive computations. 
Addressing key management of secrets, protecting \FedPoP against malicious servers/service providers, and analyzing practical attacks on \FedPoP using an FL benchmark for attacks and defenses \cite{fedsecurity24} are intriguing future directions. 

\section*{Acknowledgements}
The authors thank Prof. Gene Tsudik for his invaluable comments and discussions. This work was started and partially conducted while Devriş İşler was visiting Prof. Gene Tsudik at the University of California, Irvine and contributing to the ProperData project, also supported by NSF Award 1956393, and Seoyeon Hwang was affiliated with the University of California, Irvine. 
Devriş İşler was supported by the European Union’s HORIZON project DataBri-X (101070069). Nikolaos Laoutaris was supported by the MLEDGE project (TSI-063100-2022-0004), funded by the Ministry of Economic Affairs and Digital Transformation and the European Union NextGenerationEU/PRTR. 


\textbf{Author contribution}.
\textbf{Devriş İşler}: Conceptualization, Formal analysis, Methodology, Project administration, Validation, Visualization, Implementation, Writing - original draft/review \& editing. \textbf{Elina van Kempen}: Conceptualization, Methodology, Writing - review. \textbf{Seoyeon Hwang}: Conceptualization, Methodology, Formal Analysis, Validation, Writing - review \& editing. \textbf{Nikolaos Laoutaris}: Funding acquisition, Review of the final manuscript.

\bibliographystyle{IEEEtran}
\bibliography{sn-bibliography}
\appendix
\subsection{Comparison of Other Approaches}\label{otherinstances} 
As discussed in Section~\ref{motivation}, alternative signature schemes may replace threshold signatures in realizing \Fpop. We consider two relevant group signature types as candidates for \FedPoP: (1) ring signatures and (2) multi-signatures. 

\textit{Ring signatures}~\cite{ringsign1DHKS19} allow a user to sign on behalf of a group without revealing the actual signer. They are ad-hoc and require no central authority. The global verification key is composed of all participant public keys ($\pki$), enabling any member to sign for the group. However, this breaks threshold security, as a malicious client could forge a valid group signature. Moreover, since $\allpk$ discloses participant keys, an honest-but-curious \flserver may identify clients and link their participations. Some schemes, such as~\cite{roastPaper}, avoid this leakage by omitting $\pki$ from verification. 

\textit{Multi-signatures}~\cite{handan2021} involve each signer generating a key pair and jointly producing a compact signature verifiable against all public keys. Like ring signatures, this reveals the full set of participants, violating privacy and unlinkability. 
As a baseline, we also consider a naïve scheme where the \flserver signs the model along with client keys and shares the result. This approach not only reveals identities but also lacks resilience to dropouts and non-repudiation.
\begin{table*}[htb]
\caption{Comparison of other approaches.}
\centering
\begin{tabular}{ |c|c|c|c|c| } 
\hline
\backslashbox{\textbf{\textit{Feature}}}{\textbf{\textit{Approach}}} 
& \textit{Ring Signature-based} & \textit{Multi Signature-based} & \textit{Na\"ive} & \textit{Our \FedPoP}\\ 
& \textit{\FedPoP Instance} & \textit{\FedPoP Instance} &  & \textit{Instance}\\
\hline
Anonymity & {\color{blue}\ding{52}} & {\color{blue}\ding{52}} & {\color{red}\ding{55}} & {\color{blue}\ding{52}}\\ \hline
Privacy & {\color{red}\ding{55}} & {\color{blue}\ding{52}} & {\color{blue}\ding{52}} & {\color{blue}\ding{52}}\\ \hline
Unlinkability & {\color{red}\ding{55}} & {\color{red}\ding{55}}  & {\color{red}\ding{55}} & {\color{blue}\ding{52}}\\ \hline
Drop-out Tolerance & {\color{blue}\ding{52}} & {\color{red}\ding{55}} & {\color{red}\ding{55}} & {\color{blue}\ding{52}}\\ \hline
 \end{tabular}
    \label{tab:comparison}
\end{table*}

\subsection{\FedPoP Instance Group Witness Generation}\label{sec:prfgeneration_alternative}
\begin{figure}[htb]
\hspace*{-2cm}
\includegraphics[width=0.85\textwidth]{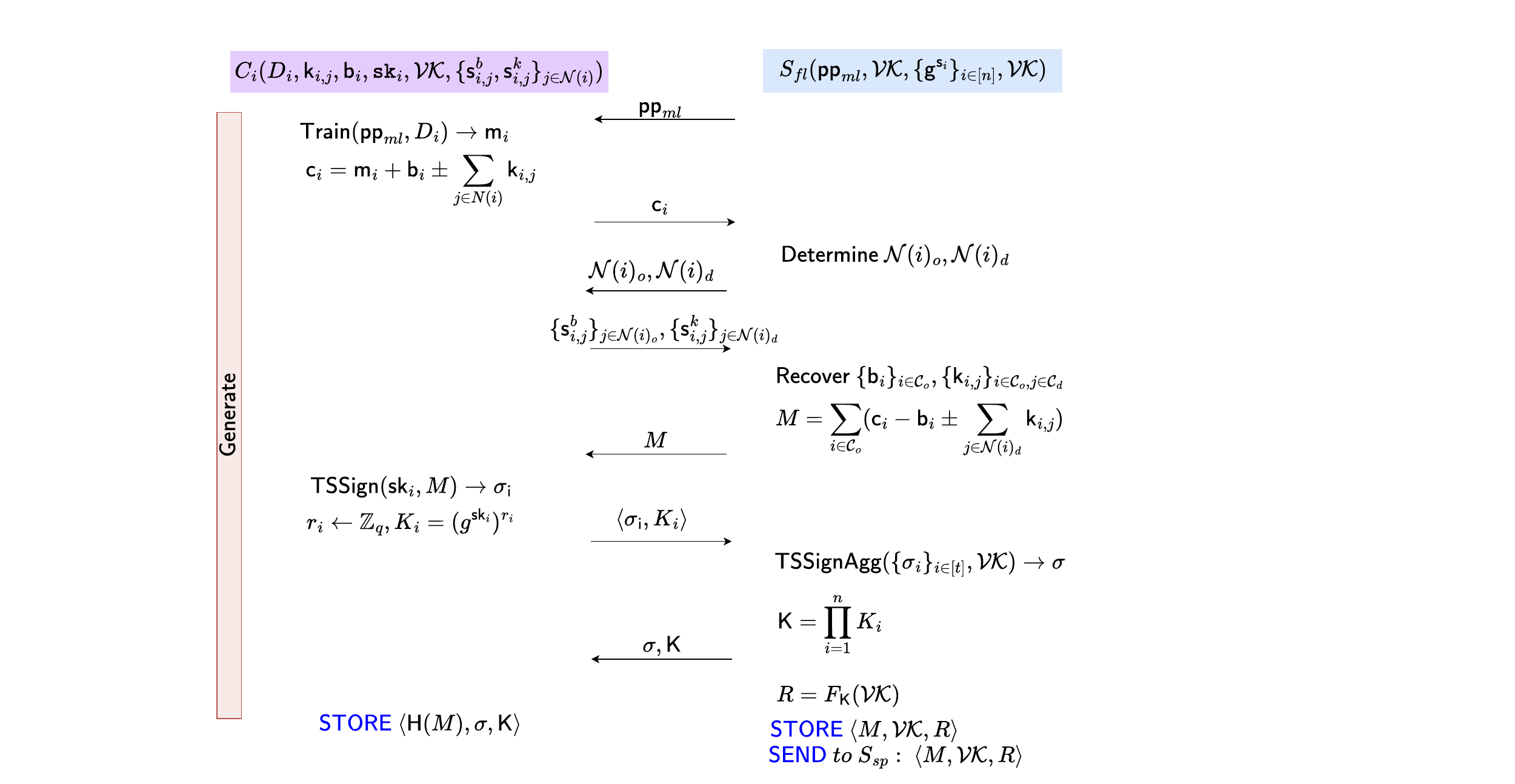}
    \caption{The \Generate phase via alternative \globalwitness creation.}
    \label{fig:alternative_prf_solution}
\end{figure}
In our solution, \flserver generates a PRF key \prfkey to serve as the group witness \globalwitness and distributes it to \onlineclientset. The resulting PRF value is then used as a proof of knowledge of \globalwitness. As noted earlier, \prfkey can be generated in several ways. One alternative is to derive it from \ski, which requires the following change to the \Generate phase: 
Before \clienti sends its partial signature on \model, \clienti samples a random element $ r_i \leftarrow \Zq $, computes $\prfkey_i = (\g^{\ski})^{r_i} $, and sends $ \prfkey_i $ to \flserver. The server then computes the global PRF key from the received shares as $ \prfkey = \prod_{i \in \onlineclientset} \prfkey_i $.\footnote{All operations used to generate \prfkey are performed modulo $q$.}
The key \prfkey serves as the group witness \globalwitness, enabling only participating clients to prove their participation in the \Prove phase as before. Because \prfkey is derived from the clients’ values $ (\g^{\ski})^{r_i} $, only participating clients can obtain it. 
We present the modified protocol in Fig.~\ref{fig:alternative_prf_solution}.

\begin{table*}[htb]
\centering
\caption{Average execution time (in seconds) of the components of \Generate. } 
{\small
\begin{tabular}{c c c|c|c|c|c|c|c}
&&&\multicolumn{3}{c|}{\textbf{Client}} 
& \multicolumn{2}{c}{\textbf{FL Server}} \\
\hline
                       \textbf{$n$}  & \textbf{$n_{drop}$}& \textbf{$t$} & \textbf{SA-Train} & \textbf{$\mathsf{TS}$-Sign} & \textbf{Total} & \textbf{SA-Agg} & \textbf{$\mathsf{TS}$-Agg} & \textit{\textbf{Total}} \\ \hline
\multirow{4}{*}{\rotatebox{45}{$n=$\textbf{10}}}  & $0.1n$   & $ 0.9n$     & 65.19        &           0.0068                  & \textbf{65.1968 }                        & 0.6560 & 0.0063& 0.6623\\
                         & $0.3n$    & $ 0.7n$   &      49.29      &            0.0049           & 49.2949                         & 0.2190& 0.0043& 0.2233\\
                         & $0.5n$    & $ 0.5n$    &     32.62       &           0.0039          & 32.6239                         & 0.0228 & 0.0029& 0.0257\\ 
                         & $0.7n$    & $ 0.3n$   &       18.75     &             0.0037           & \textbf{18.7537    }                      & 0.0192 & 0.0025& 0.0217\\ \hline
\multirow{4}{*}{\rotatebox{45}{$n=$\textbf{25}}}   & $0.1n$   & $ 0.9n$     &    63.34     &        0.0222                   & \textbf{63.3622   }                      & 0.1260 &0.0215& 0.1475\\
                         & $0.3n$   & $ 0.7n$    &     47.91       &              0.0165          & 47.9265                       & 0.0695 &0.0145& 0.0840\\
                         & $0.5n$    & $ 0.5n$    &      35.67      &              0.0131       & 35.6831                          & 0.0460 &0.0096& 0.0556\\ 
                         & $0.7n$    & $ 0.3n$   &       23.04     &               0.0083         & \textbf{23.0483}                     & 0.0357 &0.0051& 0.0408\\ \hline
                         
\multirow{4}{*}{\rotatebox{45}{$n=$\textbf{50} }}   & $ 0.1n$  & $ 0.9n$      &  67.24      &             0.0644              & \textbf{67.3044  }                       & 0.2646 &0.0629& 0.3275\\
                         & $0.3n$    & $ 0.7n$   &       52.67     &             0.0499           & 52.7199                          & 0.1957 &0.0457& 0.2414\\
                         & $0.5n$   & $ 0.5n$     &      37.93      &          0.0322           & 37.9622                          & 0.0986 &0.0267& 0.1253\\ 
                         & $0.7n$  & $ 0.3n$     &       21.00     &           0.0227             & \textbf{21.0227 }                        & 0.0798 &0.0172& 0.0970\\ \hline
\multirow{4}{*}{\rotatebox{45}{$n=$\textbf{100}}}  & $0.1n$   & $ 0.9n$     &    96.79      &            0.2002                & \textbf{96.9902   }                       & 0.9700 &0.1968& 1.1668\\
                         & $0.3n$   & $ 0.7n$    &      73.16      &        0.1457                & 73.3057                          & 0.6471 &0.1459& 0.7930\\
                         & $0.5n$  & $ 0.5n$      &       50.02     &        0.0972             & 50.1172                         & 0.4321 &0.0865& 0.5186\\ 
                         & $0.7n$   & $ 0.3n$    &       29.40     &          0.0788              & \textbf{29.4788}                          & 0.2100 &0.0568& 0.2668 \\ \hline
\end{tabular} }
\label{tab:exetime}
\end{table*}
\subsection{Full Numerical Results for \Generate}\label{app:full_num_generate}
Table~\ref{tab:exetime} expands on our experimental results for the \Generate phase and shows the efficiency of each component executed by the clients and the FL server.

\end{document}